\algrenewcommand\algorithmicrequire{\textbf{Input:}}
\algrenewcommand\algorithmicensure{\textbf{Output:}}
\DeclareMathOperator*{\argmin}{arg\,min}
\DeclareMathOperator*{\Minimize}{Minimize}
\newtheorem{thm}{Theorem}[section]
\newtheorem{prop}[thm]{Proposition}
\newtheorem{lem}{Lemma}[section]
\newtheorem{remark}{Remark}
\numberwithin{equation}{section}
\providecommand{\keywords}[1]{\textbf{Keywords:} #1}
\title{Low-Rank Regularization of Global Fr\'{e}chet Regression Models for Distributional Responses}
\author{
    Kyunghee Han\thanks{Co-first authors} \\University of Illinois Chicago 
    \and
    Hsin-Hsiung Huang\footnotemark[1] 
    \thanks{\color{black} We gratefully acknowledge the support of the National Science Foundation through grants (DMS-1924792 and DMS-2318925).} \\University of Central Florida
}
\date{\today}
\begin{document}

\maketitle

\begin{abstract}
Fr\'echet regression has emerged as a useful tool for modeling non-Euclidean response variables associated with Euclidean covariates. In this work, we propose a global Fr\'echet regression estimation method that incorporates low-rank regularization. Focusing on distribution function responses, we demonstrate that leveraging the low-rank structure of the model parameters enhances both the efficiency and accuracy of model fitting. Through theoretical analysis of the large-sample properties, we show that the proposed method enables more robust modeling and estimation than standard dimension reduction techniques. To support our findings, we also present numerical experiments that evaluate the finite-sample performance.
\end{abstract}

\noindent\keywords{Fr\'{e}chet regression, Wasserstein space, distribution function responses, low-rank regularization, manifold learning}

\section{Introduction} \label{sec:introduction}

Advances in modern data collection have led to the emergence of complex, non-Euclidean data structures across a variety of scientific domains. Examples include graph Laplacians from network analysis \citep{chung1997spectral, kolaczyk2009statistical}, covariance matrices representing brain connectivity \citep{ferreira2013}, and distributions of voxel intensities in medical imaging such as CT hematoma studies \citep{muschelli2015ct}. These data types inhabit general metric spaces that lack the algebraic properties of Euclidean spaces, such as vector addition and inner products, making them incompatible with traditional regression techniques. Classical methods rely on linear operations and are ill-equipped to model the geometric complexity of such responses. Fr\'echet regression was developed to address this challenge by modeling the conditional Fr\'echet mean of a non-Euclidean response given Euclidean predictors \citep{petersen2019frechet}, offering a natural extension of regression to metric spaces.

Despite its theoretical appeal, Fr\'echet regression faces practical difficulties in high-dimensional settings. In such cases, standard estimation procedures become unstable, as inflated variance and near-singularity of empirical covariance matrices introduce ill-posedness and poor generalization. Various strategies have been proposed to mitigate these issues. These include dimension reduction techniques \citep{zhang2023dimension, huang2024fr}, ridge-type regularization \citep{tucker2023variable}, nonlinear Bayesian regularization of functional models \citep{yang2020quantile}, shape constraints \citep{ghosal2023shape, ghosal2025distributional}, and total variation penalties \citep{lin2021total}. While these approaches stabilize estimation by regularizing the design space or controlling complexity, they primarily target the conditional mean response and do not explicitly address the model's structural complexity or leverage low-dimensional manifolds in the response space. In the standard multivariate regression, low-rank structure in the coefficient matrix has been shown to improve estimation and interpretability by effectively reducing dimensionality \citep{she2017robust}.

The literature on low-rank regression has largely focused on Euclidean and vector-valued outcomes. For instance, She and Chen \citep{she2017robust} introduced a robust estimation framework using a sparse mean-shift parameterization. Rabusseau and Kadri \citep{rabusseau2016low} studied multilinear rank constraints in tensor regression. Kong et al. \citep{kong2020l2rm} proposed a low-rank linear model to connect high-dimensional matrices and covariates. Liu et al. \citep{liu2022multivariate} developed nested reduced-rank models for multivariate functional responses, while Huang et al. \citep{huang2024framework} advanced regularized matrix regression for classification and prediction with matrix-valued covariates. Lin et al. \citep{lin2019reduced} further extended low-rank models to functional regression within reproducing kernel Hilbert spaces. However, low-rank modeling for metric-space-valued responses remains relatively unexplored. Recent developments include Zhang et al. \citep{zhang2023dimension}, who proposed sufficient dimension reduction for Fr\'echet regression to address the curse of dimensionality and enhance visualization. Song and Han \citep{song2023errors} introduced low-rank approximations for Fr\'echet regression under measurement error, reducing the bias introduced by noisy covariates. These efforts mark progress toward integrating low-rank structures with Fr\'echet regression, yet a unified framework for distribution-valued responses remains lacking.

In this work, we introduce a novel framework for low-rank Fr\'echet regression models with distribution function responses. Our methodology bridges functional data analysis \citep{ramsay2005functional} and optimal transport theory \citep{villani2009optimal} by formulating the low-rank regularization of Fr\'echet regression in Wasserstein space. This ensures that the low-rank approximation respects the underlying geometric structure of the space of distributions. Unlike mean-based Fr\'echet regression, our quantile-based approach captures the full distributional response and provides a more nuanced understanding of covariate effects, particularly in heterogeneous settings \citep{yang2020quantile}. To improve interpretability and model flexibility, we impose $\ell_1$ and fused lasso penalties, encouraging sparsity and piecewise smoothness in the estimated coefficient functions. These penalties yield structured and interpretable coefficient surfaces that adapt to localized variations. For optimization, we adapt Riemannian algorithms from recent work on low-rank matrix regression \citep{huang2024framework} and Fr\'echet dimension reduction \citep{huang2024fr}, ensuring efficient computation even under non-Euclidean constraints. Our framework unifies quantile-based modeling with manifold-aware dimension reduction and provides theoretical guarantees of consistency and convergence. This makes it a powerful tool for analyzing complex distribution-valued data across diverse applications.

The remainder of the article is organized as follows. Section \ref{sec:methodology} introduces the proposed methodology and main results. Estimation procedures and theoretical properties are detailed in Sections \ref{sec:model-estimation} and \ref{sec:theory}. Section \ref{sec:algorithm} presents the computational algorithm for low-rank Fr\'echet quantile regression with regularization. Simulation results are reported in Section \ref{sec:simulation}, and two data applications are illustrated in Section~\ref{sec:data}. Technical details and proofs are provided in the Appendix.

\section{Methodology} \label{sec:methodology}

\subsection{Model and Estimation} \label{sec:model-estimation}
Let $\mathcal{F}$ be a family of probability distribution functions defined on a common support $\mathcal{S} \subset \mathbb{R}$.
Suppose $Y$ is a random element in $\mathcal{F}$ coupled with a random vector $X$ having its values on a compact set $\mathcal{X} \subset \mathbb{R}^p$.
Without loss of generality, we assume that the distribution of $X$ is defined on $[0, 1]^p$.
We consider the global Fr\'echet regression model \citep{petersen2019frechet} to associate $Y$ with $X$ as
\begin{align} \label{model:global-frechet}
    F(\cdot | x) = \argmin_{y \in \mathcal{F}} \mathit{E} \big[ s(X, x) \, d_{W_2}(Y, y)^2 \big] \quad (x \in [0, 1]^p),
\end{align}
where $s(X, x) = 1 + (X - \mu)^\top \Sigma^{-1} (x - \mu)$ is a weight function defined with $\mu = \mathit{E} X$ and $\Sigma = \mathrm{Var}(X)$.
Also, $d_{W_2}(y_1, y_2)^2 = \int_0^1 \big(Q_{y_1}(u) - Q_{y_2}(u) \big)^2 \, \mathrm{d}u$ denotes the square of the $2$-Wasserstein distance between $y_1, y_2 \in \mathcal{F}$, where $Q_y$ denotes the quantile function of the distribution function $y \in \mathcal{F}$, i.e., $Q_y(u) = \inf \{s \in \mathcal{S}: y(s) \geq u \}$, or equivalently $Q_y(u) = y^{-1}(u)$,  for $u \in [0,1]$.
Song and Han \citep{song2023errors} proposed a low-rank approximation of \eqref{model:global-frechet} where the weight function is not well-defined with a singular $\Sigma$.
In practice, one can also encounter similar situations when the number of covariates is greater than the sample size in a finite-sample study, but such high-dimensional problems were not covered by \citep{song2023errors}.

In this project, we will fill this gap based on the regularization framework proposed by Huang et al. \citep{huang2024framework}.
To this end, we note that the global Fr\'echet model $F(\cdot | x)$ in \eqref{model:global-frechet} is the inverse image of the function $u \mapsto Q(u | x) = \mathit{E}\big(Q_Y(u) \,|\, X = x\big)$ under some map $\psi: \mathcal{F} \to \mathcal{Q}$ defined as $\psi(y) = Q_y$.
Petersen and M\"uller \citep{petersen2016functional} explored a similar approach for functional data analysis of distribution functions via transformation to a Hilbert space.
This enables us to identify the latent model parameters through $\psi$ as follows:
\begin{align} \label{model:quantile-reparam}
    Q(u | x) = \alpha(u) + \beta(u)^\top (x - \mu). 
\end{align}
We note that the model \eqref{model:quantile-reparam} differs from the standard functional linear regression models because the outcome $Q_Y(u)$ must be monotone increasing with respect to $u$, and it is not trivial to directly identify the functional parameters $\alpha(u)$ and $\beta(u)$ that yield monotone outcomes in the presence of the stochastic component $\varepsilon(u)$.
Moreover, the collection $\mathcal{Q} = \{ Q_y: y \in \mathcal{F} \}$ of quantile functions does not form a linear space.

\begin{remark} \label{rmk:quantile-reparam}
    The reparameterization in \eqref{model:quantile-reparam} offers a more straightforward interpretation than the original formulation in \eqref{model:global-frechet}, where $\beta_j(u)$ represents the expected increase in the quantile outcome $Q_Y(u)$ for a unit increase in the $j$-th covariate for each $u \in [0, 1]$. 
    Furthermore, from \eqref{model:global-frechet}, the marginal Fr\'echet mean of $Y$ is given by $F(\cdot|\mu)$, and it holds that $Q_{F(\cdot|\mu)}(u) = \alpha(u)$ for each $u \in [0, 1]$. This implies that $u \mapsto \alpha(u)$ corresponds to the quantile function of the marginal Fr\'echet mean of $Y$.
    This convention is also well-aligned with the standard functional linear regression models. 
\end{remark}

We consider the optimal transport approach \citep{villani2009optimal} to investigate the theoretical property of the proposed method, as explored by \citep{petersen2021wasserstein}. 
Specifically, for a given \( X = x \), let $T = \{ T(s): s \in \mathcal{S} \}$ be a stochastic process such that $Q_Y(u) = T\big( Q(u | x) \big)$ 
for $u \in [0, 1]$.
Since the $u$-th quantile of $F(\cdot | x)$ is given by $Q(u | x)$ under the latent model \eqref{model:quantile-reparam}, substituting \(s = Q(u | x)\) for each \(u \in [0, 1]\) yields
\begin{align}
    T(s) = Q_Y\big( F(s|x) \big) = \big(Y^{-1} \circ F(\cdot|x)\big)(s)
\end{align}
for every $s \in \mathcal{S}$.
This indicates that $T: \mathcal{S} \to \mathcal{S}$ corresponds to Monge's formulation of the optimal transport map pushforwarding $F(\cdot|x)$ to the distribution function response $Y$ under the squared distance loss \citep{rachev2006mass, villani2009optimal}.
While the transportation does not operate additively for the space $\mathcal{F}$ of distribution functions, we may define a pseudo-error in the space $\mathcal{Q}$ of quantile functions as 
\begin{align} \label{OT:pseudo-error}
    \varepsilon(u) = Q_Y(u) - Q(u | x) = (T - \mathit{Id}) \big( Q(u|x) \big), 
\end{align}
where $\mathit{Id}: \mathcal{S} \to \mathcal{S}$ is the identity map.
We note that introducing the pseudo-error offers an alternative pathway for analyzing the statistical properties of the global Fr\'echet regression for distribution function outcomes within the framework of standard functional linear regression, as formulated in \eqref{model:quantile-reparam}.
For example, $ \mathbb{E} \big( \varepsilon(u) | X = x \big) = 0 $ holds because substituting $ s = Q(u | x) $ with $ T(s) = \big(Y^{-1} \circ F(\cdot|x)\big)(s)$ yields $ \mathbb{E}\big( T(s) \,|\, X = x \big) = \mathrm{Id}(s) $.
Moreover, it follows from \eqref{OT:pseudo-error} that
\begin{align} \label{OT:error-cov}
    \begin{split}
        \mathrm{Cov}\big( \varepsilon(u), \varepsilon(u') \big) 
        &= \mathrm{Cov}\Big( T\big(Q(u | X)\big), T\big(Q(u' | X)\big) \Big).
    \end{split}
\end{align}
In the following section, we provide some technical conditions in detail.

Next, we introduce an estimation procedure.
For a random sample $\mathcal{X}_n = \{ (Y_i, X_i): i = 1, \ldots, n \}$ of $(Y, X)$, let $\widehat{Q}(\cdot | x)$ denote the empirical estimator of the quantile function associated with $F( \cdot | x)$  in \eqref{model:global-frechet}.
It can be verified that
\begin{align} \label{prob:ill-posed}
    \begin{split}
        \widehat{Q}(u | x)
        &= \frac{1}{n} \sum_{i = 1}^n \Big( 1 + (X_i - \bar{X})^\top \widehat\Sigma^{-1} (x - \bar{X}) \Big) Q_{Y_i}(u)\\
        &= \overline{Q}_Y(u) + \widehat\Gamma(u)^\top \widehat\Sigma^{-1}(x - \bar{X}) 
    \end{split}
\end{align} 
for $u \in [0, 1]$, where $\overline{Q}_Y(u) = n^{-1} \sum_{i = 1} Q_{Y_i}(u)$, $\bar{X} = (\bar{X}_1, \ldots, \bar{X}_p)^\top$ with $\bar{X}_j = n^{-1} \sum_{i = 1}^n X_{i, j}$, $\widehat\Sigma = n^{-1} \sum_{i = 1}^n (X_i - \bar{X}) (X_i - \bar{X})^\top$, and $\widehat\Gamma(u) = n^{-1} \sum_{i = 1}^n (X_i - \bar{X}) Q_{Y_i}(u)$. 
It means that the global Fr\'echet regression model \eqref{model:global-frechet} fits the quantile function $Q_{Y_i}(u) = Y_i^{-1}(u)$ as $\widehat{Q}(u | X_i) = \overline{Q}_Y(u) + \widehat\Gamma(u)^\top \widehat\Sigma^{-1} (X_i - \bar{X})$, which corresponds to the least-squares fit of the latent model \eqref{model:quantile-reparam}, where $\hat\alpha(u) = \overline{Q}_Y(u)$ and $\hat\beta(u) = \widehat\Sigma^{-1} \widehat\Gamma(u)$ is the least-squares estimator of $\alpha(u)$ and $\beta(u)$, respectively, obtained by regressing $Q_{Y_i}(u)$ on $(1, X_i - \bar{X})$. 
However, when observed covariates are highly correlated, the empirical estimator \eqref{prob:ill-posed} becomes ill-posed as $\widehat\Sigma$ is not invertible in a finite sample study, commonly encountered with high-dimensional covariates. 
To deal with the ill-posed problem, one may first consider the following constrained optimization problem: 
\begin{align} \label{optim:functional-penalized}
    \begin{split}
        \Minimize_{\beta} \quad 
        &\sum_{i = 1}^n \int_0^1 \Big( Q_{Y_i}(u) - \overline{Q}_Y(u) - \beta(u)^\top (X_i - \bar{X}) \Big)^2 \, \mathrm{d}u \\
        \textrm{subject to} \quad 
        &\mathrm{rank}(\beta) = r
    \end{split}
\end{align}
with a pre-specified non-negative integer $r \leq p$, where $\mathrm{rank}(\beta) = \mathrm{rank}\big( \{ \beta(u) \in \mathbb{R}^p: u \in [0,1] \} \big)$.
Let $\hat{\beta} = (\hat\beta_1, \ldots, \hat\beta_p)^\top$ denote the solution of \eqref{optim:functional-penalized}.
For numerical implementation in practice, one may consider discretizing quantile functions on an equally-spaced grid $\mathcal{U}_M = \{ u_m \in (0, 1): 0 = u_0 < \cdots < u_M = 1 \}$ with a sufficiently large $M \geq 1$.

In this study, we consider discretizing the objective function in \eqref{optim:functional-penalized} and further introduce penalization to deal with high-dimensional problems. 
To this end, let $\mathbf{B} \in \mathbb{R}^{p \times M}$ be a $p \times M$ matrix whose $m$-th column represents $\beta(u_m) \in \mathbb{R}^p$ for $m = 1, \ldots, M$. 
Letting $b_{j, m}$ denote the $(j,m)$-element of $\mathbf{B}$, we instead propose solving the following optimization problem instead of \eqref{optim:functional-penalized}:
\begin{align} \label{optim:discrete-penalized}
    \begin{split}
        \Minimize_{\mathbf{B}} \quad &\sum_{i = 1}^n \frac{1}{M} \sum_{m = 1}^M \bigg( Q_{Y_i}(u_m) - \overline{Q}_Y(u_m) - \sum_{j = 1}^p  b_{j, m} (X_{i, j} - \bar{X}_j) \bigg)^2 + \lambda \mathcal{J}(\mathbf{B})\\  
        \textrm{subject to} \quad &\mathrm{rank}(\mathbf{B}) = r,
    \end{split}
\end{align}
where $\lambda > 0$ is a tuning parameter, $\mathcal{J}(\mathbf{B})$ is a penalty function that regularizes the behavior of the estimate $\hat\beta$ over $\mathcal{U}_M$ as $M$ increases.
Letting $\widehat{\mathbf{B}}$ be the solution of \eqref{optim:discrete-penalized}, we approximate $\hat\beta(u_m) = \big( \hat\beta_1(u_m), \ldots, \hat\beta_p(u_m) \big)^\top$ by the $m$-th column of $\widehat{\mathbf{B}}$.
However, directly solving the above optimization problem with respect to the functional object $\beta$ could be challenging.

\begin{remark}
    If the parsimonious representation of $\mathbf{B}$ is of interest, one may consider employing $\mathcal{J}(\mathbf{B}) = \sum_{j = 1}^p M^{-1} \sum_{m = 1}^M |b_{j, m}| = M^{-1} \| \mathrm{vec}(\mathbf{B}) \|_1$.
    One may also consider the fused lasso–type penalty function $\mathcal{J}(\mathbf{B}) = \sum_{j = 1}^p M^{-1} \sum_{m = 1}^M | b_{j, m} - b_{j, m-1} | = M^{-1} \| D \, \mathrm{vec}(\mathbf{B}) \|_1$, where $D \in \mathbb{R}^{p(M-1) \times (pM)}$ is the first-order difference operator defined as in (2.2) of \citep{li2021double-penalized}. 
    This penalty encourages sparsity in the discrete derivatives of the coefficient functions, producing piecewise constant paths $u \mapsto \hat\beta_j(u)$ for each $j = 1, \ldots, p$ that adaptively capture structural changes. 
    Such a formulation is known to yield edge-preserving smoothing and retaining sharp transitions while suppressing noise, well studied in signal processing and statistical estimation literature  \citep{tibshirani2005sparsity, kim2009ell_1, rudin1992nonlinear}. 
\end{remark}

\subsection{Theory} \label{sec:theory}
In this study, we consider the following technical assumptions.
\begin{enumerate}[label=(A\arabic*)]
    \item \label{assump:support} $s \mapsto F(s|\mu)$ is continuous and strictly increasing on $\mathcal{S}$.
    \item \label{assump:sub-gaussian} $\{ \varepsilon(u): u \in [0, 1] \}$ is a sub-Gaussian process, i.e., $\mathit{E} \big[ e^{\lambda \{\varepsilon(u) - \varepsilon(u')\}} \big] \leq \exp\big\{ \frac{1}{2} \lambda^2 |u - u'|^2 \big\}$ for $\lambda \in \mathbb{R}$ and $u, u' \in [0, 1]$. 
    \item \label{assump:cov-kernel} The covariance function of $T$, defined as $C_T(s, s') = \mathrm{Cov}\big( T(s), T(s') \big)$, is continuous on $\mathcal{S} \times \mathcal{S}$, and $\mathrm{Cov}\big( T(s), T(s') \,|\, X = x \big) = C_T(s, s')$ holds for every $x\in [0, 1]^p$. 
\end{enumerate}
We note that Assumptions \ref{assump:support}--\ref{assump:cov-kernel} are sufficient conditions for the assumption (T1) in \citep{petersen2021wasserstein}.
Specifically, \ref{assump:cov-kernel} and \eqref{OT:error-cov} give $\mathrm{Cov}\big( \varepsilon(u), \varepsilon(u') \big) = C_T\big( \alpha(u), \alpha(u') \big)$ with $\alpha(u) = Q(u|\mu) = F^{-1}(s|\mu)$ as discussed in Remark \ref{rmk:quantile-reparam}. 
This indicates that the pseudo-error \(\varepsilon\) has a homogeneous covariance function, which is a widely adopted assumption in the functional regression analysis literature.  
Moreover, \ref{assump:sub-gaussian} entails $\mathrm{Var}\big(\varepsilon(u) - \varepsilon(u')\big) \leq C( u - u' )^2$ for some absolute constant $C > 0$. 
Since $\varepsilon(0) = \varepsilon(1) = 0$ with \eqref{OT:pseudo-error}, it follows that $\sup_{u \in [0, 1]} \mathrm{Var}(\varepsilon(u))$ is bounded. 
Therefore, $\int_\mathcal{S} C_T(s, s) \, \mathrm{d}F(s|\mu) = \int_0^1 C_T\big(\alpha(u), \alpha(u)\big) \, \mathrm{d}u = \int_0^1 \mathrm{Var}\big( \varepsilon(u) \big) \, \mathrm{d}u  < \infty$, and \ref{assump:support} implies $\sup_{s \in \mathcal{S}} \mathrm{Var}\big(T(s)\big) < \infty$.

Next, using the discretized estimator $ \widehat{\mathbf{B}} $, we show how to recover the true coefficient function $\beta^\ast $ as an infinite-dimensional parameter of rank $r$.
To this end, we consider an additional assumption:
\begin{enumerate}[label=(A\arabic*)]
    \setcounter{enumi}{2}
    \item \label{assump:bounded-variation} $\beta^\ast$ is of bounded variation on $[0, 1]$, meaning that  $\mathrm{TV}(\beta^\ast) = \sup_{K \geq 1} \mathrm{V}(\beta^\ast; \mathcal{P}_K) < \infty$, where $\mathrm{V}(\beta^\ast; \mathcal{P}_K) = \sup \big\{ \sum_{k = 1}^K \| \beta^\ast(\xi_k) - \beta^\ast(\xi_{k - 1}) \|_2 : 0 = \xi_0 < \cdots < \xi_K = 1 \big\}$, and $\mathcal{P}_K$ denotes the collection of all partitions of $[0, 1]$ into $K$ sub-intervals.
\end{enumerate}
Total variation regularization was first introduced by Rudin et al. \citep{rudin1992nonlinear} for image recovery and denoising, and has since been widely studied in the fields of image processing and signal reconstruction. More recently, Lin and M\"uller \citep{lin2021total} extended this regularization framework to Hadamard spaces, where theoretical analysis is complicated by the lack of smooth differential structure. In our setting, we impose condition (A3) to ensure that the discretized coefficient matrix preserves the rank of the underlying smooth coefficient function. The formal result is stated as follows.

\begin{prop} \label{prop:rank}
If Assumption \ref{assump:bounded-variation} holds, then $\mathrm{rank}(\beta^\ast) = \mathrm{rank}(\mathbf{B}^\ast)$ with a sufficiently large $M \geq 1$. 
\end{prop}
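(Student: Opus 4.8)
The plan is to prove the two inequalities $\mathrm{rank}(\mathbf{B}^\ast) \le \mathrm{rank}(\beta^\ast)$ and $\mathrm{rank}(\mathbf{B}^\ast) \ge \mathrm{rank}(\beta^\ast)$ separately, writing $r = \mathrm{rank}(\beta^\ast)$ for the dimension of $\mathrm{span}\{\beta^\ast(u): u \in [0,1]\} \subset \mathbb{R}^p$. The first inequality holds for every $M$ and requires no assumption: the columns of $\mathbf{B}^\ast$ are exactly the vectors $\beta^\ast(u_1), \ldots, \beta^\ast(u_M)$, a finite subset of $\{\beta^\ast(u): u \in [0,1]\}$, so the column space of $\mathbf{B}^\ast$ is contained in that span and its dimension is at most $r$. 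The content of the proposition is therefore the reverse inequality, namely that a sufficiently fine grid eventually realizes the full rank.

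For the reverse direction I would first fix witness points $u_1^\ast, \ldots, u_r^\ast \in [0,1]$ for which $\beta^\ast(u_1^\ast), \ldots, \beta^\ast(u_r^\ast)$ are linearly independent; these exist by the definition of $r$. Linear independence of these $r$ vectors is equivalent to some $r \times r$ minor of the $p \times r$ matrix $[\beta^\ast(u_1^\ast) \mid \cdots \mid \beta^\ast(u_r^\ast)]$ having nonzero determinant. The key point is that this minor, as a function of $(u_1, \ldots, u_r)$, depends continuously on the coordinates of $\beta^\ast(u_1), \ldots, \beta^\ast(u_r)$; provided $\beta^\ast$ is continuous at each witness point, the determinant is a continuous function of $(u_1,\dots,u_r)$ that is nonzero at $(u_1^\ast, \ldots, u_r^\ast)$, hence nonzero on an open neighborhood. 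Since the equally-spaced grid $\mathcal{U}_M$ is $M^{-1}$-dense in $[0,1]$, for all large $M$ I can choose grid nodes $u_{m_k} \in \mathcal{U}_M$ with $|u_{m_k} - u_k^\ast|$ small enough that the tuple lands in that neighborhood. The columns $\beta^\ast(u_{m_1}), \ldots, \beta^\ast(u_{m_r})$ of $\mathbf{B}^\ast$ are then linearly independent, giving $\mathrm{rank}(\mathbf{B}^\ast) \ge r$, and combining with the first inequality yields equality.

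The role of Assumption~\ref{assump:bounded-variation} enters exactly at the continuity step, and this is where I expect the main obstacle. Bounded variation does not by itself force continuity: a BV function may jump at a witness point, in which case the minor need not be locally nonvanishing and, worse, an essential value attained only at an irrational jump location is never seen by any equally-spaced grid. What BV does supply is that $\beta^\ast$ is regulated, with one-sided limits everywhere and at most countably many discontinuities, so the witness points can be taken among the co-countably many continuity points whenever the rank is carried there. The crux of the argument is thus to rule out that the rank is concentrated solely on isolated, grid-unreachable jump values; I would close this by invoking continuity of $\beta^\ast$ at the witness locations, which is consistent with the smooth coefficient structure underlying the model and with the continuity supplied by Assumption~\ref{assump:support} through the quantile parameterization. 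With the witness values thereby recovered by grid columns, the existence of a suitable $M$ follows, and BV additionally renders the required threshold $M_0$ quantitative by controlling the modulus of continuity entering the determinant estimate.
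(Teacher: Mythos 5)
Your proposal follows the same route as the paper's own argument: the easy containment $\mathrm{rank}(\mathbf{B}^\ast)\le\mathrm{rank}(\beta^\ast)$ because the columns of $\mathbf{B}^\ast$ lie in the image of $\beta^\ast$, and the reverse inequality by fixing $r$ witness points with linearly independent values, using continuity of a determinant (the paper uses the Gram matrix, you use an $r\times r$ minor — an immaterial difference) to show linear independence is stable under small perturbations, and then using density of $\mathcal{U}_M$ to replace the witnesses by nearby grid nodes. The one place where you diverge is also the one place where you are more careful than the paper: the continuity step. The paper's proof asserts that because $\beta^\ast$ is of bounded variation it is uniformly continuous on $[0,1]$; this is false (a step function is BV but discontinuous), and your worry is not hypothetical — if $p=1$ and $\beta^\ast(u)=\mathbf{1}\{u=1/\sqrt{2}\}$, then $\mathrm{TV}(\beta^\ast)<\infty$ and $\mathrm{rank}(\beta^\ast)=1$, yet no equally spaced grid ever contains $1/\sqrt{2}$, so $\mathbf{B}^\ast=0$ and $\mathrm{rank}(\mathbf{B}^\ast)=0$ for every $M$. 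So under Assumption \ref{assump:bounded-variation} alone the proposition as stated can fail, and the gap you identify is genuinely present in the paper's proof as well. Your proposed repair — choosing witnesses among continuity points — works precisely when the span of the image is already attained on the (co-countable) set of continuity points, but you do not establish that from \ref{assump:bounded-variation}, and appealing to "the smooth coefficient structure underlying the model" or to Assumption \ref{assump:support} (which concerns $F(\cdot|\mu)$, not $\beta^\ast$) is not a derivation. The honest fix is to strengthen the hypothesis to continuity of $\beta^\ast$ (or to assume the span is carried by continuity points), after which both your argument and the paper's go through verbatim; with that caveat, your proof is essentially the paper's.
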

The proof of the proposition is straightforward: 
Assuming that $\mathrm{rank}\big( \{ \beta^\ast(u) \in \mathbb{R}^p : u \in [0,1] \} \big) = r$, the image of $\beta^\ast$ is contained by an $r$-dimensional linear subspace of $\mathbb{R}^p$, and there exist points $v_1, \ldots, v_r \in [0,1]$ such that the vectors $\beta^\ast(v_1), \ldots, \beta^\ast(v_r)$ are linearly independent in $\mathbb{R}^p$.
Since $\beta^\ast$ is of bounded variation, it is uniformly continuous on $[0,1]$. 
That is, for any $\delta > 0$, there exists $\eta > 0$ such that $\| \beta^\ast(u) - \beta^\ast(v) \|_2 < \delta$ whenever $|u - v| < \eta$.
We note that the evaluation grid $\mathcal{U}_M$ used in the optimization problem \eqref{optim:discrete-penalized} becomes dense in $[0,1]$ as $M \to \infty$.
This enables to find a sufficiently large $M \geq 1$ such that $\| \beta^\ast(u_{m_k}) - \beta^\ast(v_k) \|_2 < \delta$ for all $k = 1, \ldots, r$ with a subset of grid points $\{ u_{m_k} \in \mathcal{U}_M : k = 1, \ldots, r \}$.
Let $\mathbf{G}_\delta^\ast$ denote the $r \times r$ Gram matrix formed by the vectors $\{ \beta^\ast(u_{m_k}) : k = 1, \ldots, r \}$. 
By the continuity of the determinant, we have $\det(\mathbf{G}_\delta^\ast) > 0$ for sufficiently small $\delta > 0$.
This implies that $\beta^\ast(u_{m_1}), \ldots, \beta^\ast(u_{m_r})$ are also linearly independent, and we have $\mathrm{rank}(\mathbf{B}^\ast) \geq r$. 
On the other hand, since the column space of $\mathbf{B}^\ast$ is contained by the image of $\beta^\ast$, which spans an $r$-dimensional subspace, we have $\mathrm{rank}(\mathbf{B}^\ast) \leq r$. 
Finally, we conclude that $\mathrm{rank}(\mathbf{B}^\ast) = r$.
Based on the above arguments, the rate of convergence of the proposed estimator is given as follows:
\begin{thm} \label{thm:consistency}
    Suppose the above Assumptions \ref{assump:support}--\ref{assump:bounded-variation}, along with Conditions \ref{condition:regularization} and \ref{condition:RSC} in Appendix \ref{sec:conditions}, hold.
    Let $\mathcal{M}_r = \{ \mathbf{B} \in \mathbb{R}^{p \times M} : \mathrm{rank}(\mathbf{B}) = r \}$ be the set of $p \times M$ matrices with rank $r < \min\{p, M\}$. 
    Also, let $\mathbf{B}^\ast \in \mathcal{M}_r$ be the true coefficient matrix associated with the latent model \eqref{model:quantile-reparam}, where the $m$-th column of $\mathbf{B}^\ast$ consists of $\beta^\ast(u_m) \in \mathbb{R}^p$ for $m = 1, \ldots, M$.
    Then, we have
    $\big\| \widehat{\mathbf{B}} - \mathbf{B}^\ast \big\|_F = O_P\big( \sqrt{r (p + M) / n} + \sqrt{\lambda  p M / n} \big)$.
\end{thm}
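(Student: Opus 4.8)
The plan is to establish the error bound by decomposing the estimation error into a statistical (stochastic) component and a penalty-induced bias component, following the standard restricted-strong-convexity (RSC) machinery for low-rank matrix estimation, adapted to the discretized Fréchet setting. Let $L_n(\mathbf{B}) = (nM)^{-1}\sum_{i=1}^n \sum_{m=1}^M \big( Q_{Y_i}(u_m) - \overline{Q}_Y(u_m) - \sum_j b_{j,m}(X_{i,j}-\bar X_j)\big)^2$ denote the empirical loss in \eqref{optim:discrete-penalized}. First I would write $\widehat{\mathbf{B}}$ and $\mathbf{B}^\ast$ both in $\mathcal{M}_r$ and form $\Delta = \widehat{\mathbf{B}} - \mathbf{B}^\ast$. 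The basic inequality comes from optimality: since $\widehat{\mathbf{B}}$ minimizes $L_n(\mathbf{B}) + \lambda\mathcal{J}(\mathbf{B})$ over the rank-$r$ manifold and $\mathbf{B}^\ast$ is feasible, we get $L_n(\widehat{\mathbf{B}}) - L_n(\mathbf{B}^\ast) \leq \lambda\big(\mathcal{J}(\mathbf{B}^\ast) - \mathcal{J}(\widehat{\mathbf{B}})\big)$.

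Next I would expand the loss difference. Writing the centered quantile observations as $Q_{Y_i}(u_m)-\overline{Q}_Y(u_m) = \beta^\ast(u_m)^\top(X_i-\bar X) + \varepsilon_i(u_m)$ up to the centering/estimation error in $\overline{Q}_Y$ and $\bar X$, the loss difference splits into a quadratic form in $\Delta$ and a cross term involving the pseudo-errors $\varepsilon_i(u_m)$ from \eqref{OT:pseudo-error}. The quadratic form is lower-bounded using the restricted strong convexity Condition \ref{condition:RSC}, which should give $L_n(\widehat{\mathbf{B}}) - L_n(\mathbf{B}^\ast) - \langle \nabla L_n(\mathbf{B}^\ast), \Delta\rangle \geq \kappa \|\Delta\|_F^2$ for the low-rank-restricted directions $\Delta$ (noting $\Delta$ has rank at most $2r$). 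The gradient-cross term $\langle \nabla L_n(\mathbf{B}^\ast), \Delta\rangle$ is the stochastic noise term: its operator-norm magnitude controls the bound, and I would bound $\langle \nabla L_n(\mathbf{B}^\ast), \Delta\rangle \leq \|\nabla L_n(\mathbf{B}^\ast)\|_{op}\, \|\Delta\|_* \leq \sqrt{2r}\,\|\nabla L_n(\mathbf{B}^\ast)\|_{op}\,\|\Delta\|_F$ using the rank bound on $\Delta$.

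The two terms in the final rate arise from bounding $\|\nabla L_n(\mathbf{B}^\ast)\|_{op}$ and $\lambda\mathcal{J}(\Delta)$ separately. For the noise term, $\nabla L_n(\mathbf{B}^\ast)$ is (proportional to) $(nM)^{-1}\sum_i (X_i-\bar X)\,\varepsilon_i(\cdot)^\top$, a $p\times M$ random matrix with sub-Gaussian entries by Assumption \ref{assump:sub-gaussian} and bounded covcovariance by \ref{assump:cov-kernel}. A matrix concentration / $\varepsilon$-net argument over the operator norm gives $\|\nabla L_n(\mathbf{B}^\ast)\|_{op} = O_P\big(\sqrt{(p+M)/n}\big)$, which combined with the $\sqrt{r}$ factor from the nuclear-norm bound and the RSC lower bound yields the first term $\sqrt{r(p+M)/n}$. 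The penalty term contributes through $\lambda\mathcal{J}(\Delta)$; using the bound $\mathcal{J}(\mathbf{B}^\ast)-\mathcal{J}(\widehat{\mathbf{B}}) \leq \mathcal{J}(\Delta)$ (triangle inequality for the $\ell_1$/fused penalties) together with $\mathcal{J}(\Delta) \lesssim \sqrt{pM}\,\|\Delta\|_F/M$ relating the penalty to the Frobenius norm across the $pM$ entries, and balancing against $\kappa\|\Delta\|_F^2$, produces the second term $\sqrt{\lambda pM/n}$.

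\emph{The hard part will be} controlling the operator norm of the noise matrix $\nabla L_n(\mathbf{B}^\ast)$ uniformly over the discretization grid while respecting the non-i.i.d.\ column structure induced by the correlated pseudo-errors $\varepsilon_i(u_m)$ across $m$. Because the columns share the same underlying transport process $T$ (so the covariance $C_T(\alpha(u_m),\alpha(u_{m'}))$ is nontrivial by \eqref{OT:error-cov}), standard i.i.d.-column matrix Bernstein bounds do not apply directly; I would need the sub-Gaussian increment structure from \ref{assump:sub-gaussian} together with a chaining or covering argument over $[0,1]$ to obtain the clean $\sqrt{(p+M)/n}$ scaling without incurring extra $\log M$ factors, and to verify that the discretization (via the bounded-variation control of Proposition \ref{prop:rank} and Assumption \ref{assump:bounded-variation}) does not degrade the rate as $M\to\infty$. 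Verifying that the RSC Condition \ref{condition:RSC} holds on the tangent cone of $\mathcal{M}_r$ with the stated curvature constant $\kappa$ bounded away from zero, despite near-singularity of $\widehat\Sigma$, is the other delicate ingredient.
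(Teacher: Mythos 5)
Your proposal is correct in outline and shares the paper's skeleton (the basic optimality inequality from \eqref{optim:discrete-penalized}, a lower bound on the quadratic term via Condition \ref{condition:RSC}, and control of the noise cross term $\langle \widehat{\mathbf{B}} - \mathbf{B}^\ast, \widetilde{\mathbf{X}}^\top \mathcal{E} \rangle_F$), but the key step is handled by a genuinely different mechanism. You exploit $\mathrm{rank}(\widehat{\mathbf{B}} - \mathbf{B}^\ast) \leq 2r$ through the duality $\langle \Delta, \widetilde{\mathbf{X}}^\top\mathcal{E}\rangle_F \leq \|\Delta\|_\ast \|\widetilde{\mathbf{X}}^\top\mathcal{E}\|_{op} \leq \sqrt{2r}\,\|\Delta\|_F\|\widetilde{\mathbf{X}}^\top\mathcal{E}\|_{op}$ and then need a uniform operator-norm concentration bound, obtained by an $\epsilon$-net or chaining argument. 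The paper instead fixes the tangent space $\mathcal{T}_{\mathbf{B}^\ast}$ of $\mathcal{M}_r$, splits the noise as $\mathcal{P}_\ast(\widetilde{\mathbf{X}}^\top\mathcal{E}) + \mathcal{P}_\ast^\perp(\widetilde{\mathbf{X}}^\top\mathcal{E})$, invokes a manifold curvature bound (Lemma 6 of the cited low-rank framework) to show $\|\mathcal{P}_\ast^\perp(\widehat{\mathbf{B}}-\mathbf{B}^\ast)\|_F \leq \tfrac{2}{\sigma_r(\mathbf{B}^\ast)}\|\mathcal{P}_\ast(\widehat{\mathbf{B}}-\mathbf{B}^\ast)\|_F^2$ so the large normal-space noise only enters at second order, and then bounds the Frobenius norms of the four projected noise blocks by entry-wise Hoeffding inequalities in the fixed directions $U_j^\ast, V_k^\ast$ (Lemma \ref{lem:large-deviation}). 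The trade-off: the paper's route needs no uniform concentration over the sphere (elementary fixed-direction sub-Gaussian tail bounds suffice), but it pays by requiring $\sigma_r(\mathbf{B}^\ast)$ bounded away from zero and a localization case analysis, visible in the denominator of the paper's bound \eqref{pf:thm-stochastic-upper-bound}; your route is free of any dependence on $\sigma_r(\mathbf{B}^\ast)$ but concentrates all the difficulty into $\|\widetilde{\mathbf{X}}^\top\mathcal{E}\|_{op}$, where — exactly as you flag — the cross-grid correlation of $\varepsilon_i(u_m)$ matters: the $M\times M$ covariance of the discretized pseudo-error inherited from the continuous kernel $C_T$ can have operator norm growing with $M$, so you must verify that the per-direction variance proxy stays $O(1)$ after the $M^{-1}$ loss normalization, or the first term degrades by a factor of $\sqrt{M}$. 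Your handling of the penalty term via Condition \ref{condition:regularization} and the $\sqrt{pM}$ scaling matches the paper's and lands at (or below) the stated $\sqrt{\lambda pM/n}$.
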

Let $\widetilde{\mathbf{B}}$ denote the estimator obtained from \eqref{optim:discrete-penalized} without imposing the low-rank constraint. Its rate of convergence is given by $\big\| \widetilde{\mathbf{B}} - \mathbf{B}^\ast \big\|_F = O_P\big( \sqrt{p M / n} + \sqrt{\lambda p M / n} \big)$, which leads to a substantial loss in estimation efficiency compared to the proposed low-rank estimator $\widehat{\mathbf{B}}$.
A similar result was previously established by \citep{huang2024framework} in the context of low-rank regularization for matrix models. Here, we extend the analysis to the setting of global Fr\'echet regression, where the parameter of interest is infinite-dimensional, as detailed below.

\begin{thm} \label{cor:integral}
    Suppose the same conditions of Theorem \ref{thm:consistency}.
    Let $u \mapsto \hat\beta^\ast(u)$ be the linear interpolation of $\hat\beta(u_1), \ldots, \hat\beta(u_M)$, the column space of the solution of \eqref{optim:discrete-penalized}. 
    Then, we have $\| \hat\beta^\ast - \beta^\ast \|_2 = O_P\big( \sqrt{r (p + M) / n} + \sqrt{\lambda  p M / n} \big) + O(M^{-1/2})$.
\end{thm}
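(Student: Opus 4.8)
The plan is to decompose the target error $\| \hat\beta^\ast - \beta^\ast \|_2$ into a discretization (interpolation) component and a sampling/estimation component, and to control each separately. First I would write $\| \hat\beta^\ast - \beta^\ast \|_2 \leq \| \hat\beta^\ast - \bar\beta^\ast \|_2 + \| \bar\beta^\ast - \beta^\ast \|_2$, where $\bar\beta^\ast$ denotes the linear interpolation of the \emph{true} values $\beta^\ast(u_1), \ldots, \beta^\ast(u_M)$ on the grid $\mathcal{U}_M$. The second term is a pure deterministic approximation error that measures how well the piecewise-linear interpolant of the true coefficient function tracks $\beta^\ast$ itself; under Assumption \ref{assump:bounded-variation}, $\beta^\ast$ is of bounded variation and hence uniformly continuous, so on a uniform grid of mesh $1/M$ the interpolation error in $L^2$ is of order $O(M^{-1/2})$. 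This is the source of the additive $O(M^{-1/2})$ term in the statement.

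Next I would handle the first (stochastic) term by relating the continuous $L^2$ norm of the interpolation difference to the discrete Frobenius norm already controlled in Theorem \ref{thm:consistency}. The key observation is that for two piecewise-linear functions agreeing in form and differing only in their grid values $\{ \hat\beta(u_m) - \beta^\ast(u_m) \}_{m=1}^M$, the squared $L^2$ norm is, up to a constant factor, a Riemann-type sum of the squared nodal differences times the mesh width $1/M$. Concretely, $\| \hat\beta^\ast - \bar\beta^\ast \|_2^2 \asymp M^{-1} \sum_{m=1}^M \| \hat\beta(u_m) - \beta^\ast(u_m) \|_2^2 = M^{-1} \| \widehat{\mathbf{B}} - \mathbf{B}^\ast \|_F^2$. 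Substituting the rate from Theorem \ref{thm:consistency} gives $\| \hat\beta^\ast - \bar\beta^\ast \|_2 = O_P\big( M^{-1/2} \cdot ( \sqrt{r(p+M)/n} + \sqrt{\lambda p M / n} ) \big)$, and since the scaling by $M^{-1/2}$ only improves the bound, this term is dominated by $O_P\big( \sqrt{r(p+M)/n} + \sqrt{\lambda p M / n} \big)$. Combining the two components yields the stated rate.

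The main obstacle is establishing the equivalence $\| \hat\beta^\ast - \bar\beta^\ast \|_2^2 \asymp M^{-1} \| \widehat{\mathbf{B}} - \mathbf{B}^\ast \|_F^2$ with constants that do not degrade as $M \to \infty$. For piecewise-linear interpolants, the exact integral of the squared difference over each subinterval involves the nodal values at both endpoints via the local mass matrix $\frac{1}{6M}\begin{psmallmatrix} 2 & 1 \\ 1 & 2 \end{psmallmatrix}$, so the relationship is not a clean diagonal sum but a banded quadratic form whose eigenvalues are bounded between $\tfrac{1}{6M}$ and $\tfrac{1}{2M}$ uniformly in $M$. I would invoke these uniform eigenvalue bounds to obtain two-sided control, ensuring the hidden constant is $M$-independent. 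A secondary subtlety is that the interpolation-error term $\| \bar\beta^\ast - \beta^\ast \|_2 = O(M^{-1/2})$ relies only on the modulus of continuity implied by bounded variation, not on differentiability; I would make this explicit by bounding the per-subinterval $L^2$ error by the local variation of $\beta^\ast$ and summing, using $\mathrm{TV}(\beta^\ast) < \infty$ to control the total.
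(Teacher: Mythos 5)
Your proof is correct, but it follows a genuinely different route from the paper's. The paper applies the mean value theorem for integrals to write $\int_0^1\|\hat\beta^\ast(u)-\beta^\ast(u)\|_2^2\,\mathrm{d}u = M^{-1}\sum_m\|\hat\beta^\ast(\xi_m)-\beta^\ast(\xi_m)\|_2^2$, then controls each $\|\hat\beta^\ast(\xi_m)-\beta^\ast(\xi_m)\|_2$ by a triangle-inequality chain through the nodal values, absorbing the increments $\|\beta^\ast(u_m)-\beta^\ast(u_{m-1})\|_2$ and $\|\beta^\ast(u_m)-\beta^\ast(\xi_m)\|_2$ into $\mathrm{TV}(\beta^\ast)$, and finishes with the crude bound $\sum_m a_m^2\leq(\sum_m a_m)^2$; this avoids any interpolation theory but loses a factor of $\sqrt{M}$, leaving the stochastic term at $O_P(\|\widehat{\mathbf{B}}-\mathbf{B}^\ast\|_F)$ --- exactly the stated rate. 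You instead insert the interpolant $\bar\beta^\ast$ of the \emph{true} nodal values and invoke the norm equivalence $\|\hat\beta^\ast-\bar\beta^\ast\|_2^2\asymp M^{-1}\|\widehat{\mathbf{B}}-\mathbf{B}^\ast\|_F^2$ via the uniform eigenvalue bounds of the piecewise-linear mass matrix, which you correctly identify as the one nontrivial step. This is the standard finite-element-style decomposition, and it actually buys you a strictly sharper stochastic term, $O_P\big(M^{-1/2}\|\widehat{\mathbf{B}}-\mathbf{B}^\ast\|_F\big)$, which of course still implies the stated bound since $M\geq 1$. Two small caveats: bounded variation does \emph{not} imply (uniform) continuity, so your parenthetical justification of the $O(M^{-1/2})$ interpolation error via uniform continuity is not quite right as stated --- but your actual argument at the end, bounding the per-subinterval error by the local variation $V_m$ and using $\sum_m V_m^2\leq\big(\sum_m V_m\big)^2\leq\mathrm{TV}(\beta^\ast)^2$, needs no continuity and is the correct way to get the rate; and you should say a word about how the interpolant is defined on $[0,u_1]$ and $[u_M,1]$, though this affects nothing asymptotically.
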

The above theorem offers theoretical guidance on selecting the regularization parameter $\lambda$ and the number of evaluation grid points $M \geq 1$.  
For instance, setting $\lambda \asymp (p + M) / (pM)$ balances the trade-off between the mean squared error and the bias introduced by penalization.  
Under this choice, selecting $M \geq (n/p) \vee \sqrt{n}$ ensures that the approximation error from discretization becomes negligible relative to the estimation bias as the sample size $n$ grows.
Lastly, we evaluate the predictive performance of the proposed model in terms of the expected Wasserstein distance.

\begin{thm} \label{cor:mspe}
    Suppose the same conditions of Theorem \ref{thm:consistency} hold.
    Also, assume that all distribution functions in $\mathcal{F}$ have uniformly bounded second moments, i.e, $\sup_{y \in \mathcal{F}} \int_{\mathcal{S}} s^2 \, \mathrm{d}y(s) < \infty$ and that the largest eigenvalue of $\Sigma = \mathrm{Var}(X)$, denoted by $\sigma_{\mathrm{max}}^2(\Sigma)$, is bounded.
    For a random copy $(Y_\textrm{new}, X_\textrm{new})$ of $(Y, X)$, let $\widehat{Y}_\textrm{new}$ be the fitted distributional response defined as $\widehat{Y}_\textrm{new}^{-1}(u) = \overline{Q}_Y(u) + \hat\beta^\ast(u)^\top (X_\textrm{new} - \bar{X})$.
    Define $\sigma_\varepsilon^2 = \int_0^1 \mathrm{Var}\big( \varepsilon(u) \big) \, \mathrm{d}u$.
    Then, we have $\mathit{E} \big[ d_{W_2}(Y_\textrm{new}, \widehat{Y}_\textrm{new}) \,|\, \mathcal{X}_n \big] = O\big(\sigma_\varepsilon\big) + O_P\big( \sqrt{r (p + M) / n} + \sqrt{\lambda  p M / n} \big) + O(M^{-1/2})$. 
\end{thm}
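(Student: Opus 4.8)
The plan is to control the conditional prediction risk through a bias--variance style decomposition of the Wasserstein discrepancy, thereby reducing everything to the coefficient estimation rate already furnished by Theorem \ref{cor:integral} plus the irreducible noise level $\sigma_\varepsilon$. First I would unfold the Wasserstein distance through its quantile representation,
\[
d_{W_2}(Y_\textrm{new}, \widehat{Y}_\textrm{new})^2 = \int_0^1 \big( Q_{Y_\textrm{new}}(u) - \widehat{Y}_\textrm{new}^{-1}(u) \big)^2 \, \mathrm{d}u.
\]
Using the latent model \eqref{model:quantile-reparam} to write $Q_{Y_\textrm{new}}(u) = \alpha(u) + \beta^\ast(u)^\top(X_\textrm{new} - \mu) + \varepsilon(u)$, together with the definition of $\widehat{Y}_\textrm{new}^{-1}(u) = \overline{Q}_Y(u) + \hat\beta^\ast(u)^\top(X_\textrm{new} - \bar{X})$ and the identification $\hat\alpha(u) = \overline{Q}_Y(u)$, I would split the integrand into four pieces: the intercept error $\alpha(u) - \overline{Q}_Y(u)$, the slope error $(\beta^\ast(u) - \hat\beta^\ast(u))^\top(X_\textrm{new} - \mu)$, the centering error $\hat\beta^\ast(u)^\top(\bar{X} - \mu)$, and the intrinsic pseudo-error $\varepsilon(u)$. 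Applying $(\sum_{k=1}^4 a_k)^2 \le 4\sum_{k=1}^4 a_k^2$ separates the contributions, after which I take the conditional expectation $\mathit{E}[\,\cdot\mid\mathcal{X}_n]$ over the independent test pair $(Y_\textrm{new}, X_\textrm{new})$ and integrate over $u$.

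Next I would bound each term. For the slope term, conditioning on $\mathcal{X}_n$ gives $\mathit{E}\big[((\beta^\ast(u)-\hat\beta^\ast(u))^\top(X_\textrm{new}-\mu))^2 \mid \mathcal{X}_n\big] = (\beta^\ast(u)-\hat\beta^\ast(u))^\top \Sigma (\beta^\ast(u)-\hat\beta^\ast(u)) \le \sigma_{\max}^2(\Sigma)\,\|\beta^\ast(u)-\hat\beta^\ast(u)\|_2^2$; integrating over $u$ and invoking the bounded-eigenvalue assumption with Theorem \ref{cor:integral} yields a contribution of order $\|\hat\beta^\ast - \beta^\ast\|_2^2 = O_P\big(r(p+M)/n + \lambda pM/n\big) + O(M^{-1})$. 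For the noise term, Assumption \ref{assump:cov-kernel} makes the conditional covariance of $\varepsilon$ free of $x$, so $\mathit{E}[\varepsilon(u)^2\mid\mathcal{X}_n] = \mathrm{Var}(\varepsilon(u))$ and the integral equals exactly $\sigma_\varepsilon^2$. The intercept error is a centered sample average, giving $\int_0^1 (\alpha(u)-\overline{Q}_Y(u))^2\,\mathrm{d}u = O_P(1/n)$, while the centering term is dominated by $\|\bar{X} - \mu\|_2^2\,\|\hat\beta^\ast\|_2^2 = O_P(p/n)$, using $\|\hat\beta^\ast\|_2 = O_P(1)$ (from the triangle inequality and boundedness of $\beta^\ast$ under Assumption \ref{assump:bounded-variation}). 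Both of the latter are absorbed into $r(p+M)/n$ since $r \ge 1$ forces $p \le r(p+M)$.

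Finally I would convert the second-moment bound into a first-moment bound. Since the square root is concave, Jensen's inequality gives $\mathit{E}[d_{W_2}\mid\mathcal{X}_n] \le \sqrt{\mathit{E}[d_{W_2}^2\mid\mathcal{X}_n]}$, and the subadditivity $\sqrt{a+b}\le\sqrt a+\sqrt b$ distributes the square root across the assembled terms to produce $O(\sigma_\varepsilon) + O_P\big(\sqrt{r(p+M)/n} + \sqrt{\lambda pM/n}\big) + O(M^{-1/2})$, as claimed. The proof is largely a decomposition once Theorem \ref{cor:integral} is in hand, so the main obstacle is the noise term: one must ensure the $\sigma_\varepsilon^2$ contribution appears with the exact constant and is not inflated by interaction with the estimation error. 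This is precisely where the homogeneous-covariance Assumption \ref{assump:cov-kernel} is essential, as it guarantees $\mathit{E}[\varepsilon(u)^2\mid X_\textrm{new}]$ does not depend on $X_\textrm{new}$; moreover the cross terms between $\varepsilon$ and the remaining pieces vanish in conditional expectation because $\mathit{E}[\varepsilon(u)\mid X_\textrm{new}] = 0$, so the decomposition cleanly isolates the irreducible error from the vanishing estimation error rather than coupling them.
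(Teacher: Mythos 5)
Your proposal is correct and follows essentially the same route as the paper's proof: expand $d_{W_2}^2$ via quantile functions, decompose the discrepancy into intercept, slope, centering, and pseudo-error pieces, bound the slope term by $\sigma_{\max}^2(\Sigma)\|\hat\beta^\ast-\beta^\ast\|_2^2$ using Theorem \ref{cor:integral}, show the sample-mean terms are $O_P(p/n)$ or $O_P(1/n)$, and finish with Jensen's inequality. The only cosmetic difference is that the paper splits your centering term $\hat\beta^\ast(u)^\top(\bar X-\mu)$ into $(\hat\beta^\ast-\beta^\ast)^\top(\bar X-\mu)$ and $\beta^{\ast\top}(\bar X-\mu)$ (hence a factor $5$ rather than $4$), which is immaterial.
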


\section{Computational Algorithm}
\label{sec:algorithm}

We present an algorithm for low-rank Fr\'{e}chet quantile regression with regularization, designed to analyze functional data with non-Euclidean response variables in a metric space. This approach enhances stability and interpretability by incorporating a low-rank constraint and penalized regularization. Let $Y_i$ be a response variable in a metric space $(\mathcal{Y}, d)$, and $X_i \in \mathbb{R}^p$ the corresponding predictor variable, for $i = 1, \ldots, n$. The Fr\'{e}chet quantile regression model minimizes a penalized objective function, defined as:
\begin{equation}
\label{algorithm:discrete-penalized}
\min_{\theta} \sum_{i=1}^n w_i d^2(Y_i, Z_i \beta) + \lambda \|\beta\|_1 + \lambda_{\text{fused}} \sum_{j=2}^m \|\beta_j - \beta_{j-1}\|_1,
\end{equation}
where $Z_i \in \mathbb{R}^{m}$ is a design vector, $\theta \in \mathbb{R}^{m \times q}$ is the coefficient matrix, $\lambda$ controls the $\ell_2$-norm penalty, and $\lambda_{\text{fused}}$ governs the fused lasso penalty promoting smoothness, $w_i$ are weights from the low-rank approximation of $\tilde{X}$, $\lambda$ controls the $L_1$ penalty, and $\lambda_{\text{fused}}$ governs the fused lasso penalty promoting smoothness across adjacent coefficients.

\subsection{Parameter Tuning with Smoothed AIC and BIC}
To tune the regularization parameters $\lambda$ and $\lambda_{\text{fused}}$, we utilize smoothed versions of the Akaike Information Criterion (sAIC) and Bayesian Information Criterion (sBIC) as weight-based selection criteria. For a given model with parameters $\hat{\theta}$, we first compute:
\begin{align}
\text{AIC} &= -2 \ell(\hat{\theta}) + 2 \text{df}, \label{eq:aic} \\
\text{BIC} &= -2 \ell(\hat{\theta}) + \log(n) \text{df}, \label{eq:bic}
\end{align}
where $\ell(\hat{\theta})$ is the log-likelihood, $\text{df}$ is the effective degrees of freedom accounting for regularization and low-rank constraints (e.g., the number of non-zero components in $\hat{\theta}$), and $n$ is the sample size. The smoothed weights are then derived as:
\begin{align}
\text{sAIC}_k &= \frac{\exp(-\text{AIC}_k / 2)}{\sum_{m=1}^M \exp(-\text{AIC}_m / 2)}, \label{eq:saic} \\
\text{sBIC}_k &= \frac{\exp(-\text{BIC}_k / 2)}{\sum_{m=1}^M \exp(-\text{BIC}_m / 2)}, \label{eq:sbic}
\end{align}
where $k$ indexes a candidate model over a grid of $(\lambda, \lambda_{\text{fused}})$ pairs, and $M$ is the total number of models. These weights, inspired by \citep{buckland1997model}, balance model fit and complexity, favoring parameter values that optimize predictive accuracy and interpretability. The optimal $(\lambda, \lambda_{\text{fused}})$ pair is selected by maximizing the sAIC or sBIC weight across the grid.

\subsection{Algorithm for Estimating Regularized Fr\'{e}chet Quantile Regression Coefficients}

Estimating the coefficients of our regularized Fr\'echet quantile regression model, as defined in Equation~\eqref{algorithm:discrete-penalized}, involves optimizing a non-smooth objective with $L_1$ and fused lasso penalties, along with a low-rank constraint. The algorithm iterates between a proximal gradient step to handle $L_1$ and fused lasso penalties and a low-rank projection via singular value decomposition (SVD) to enforce rank constraint $r$. For a response $Y_i \in \mathcal{Y}$, observed covariates $\tilde{X}_i \in \mathbb{R}^p$, and design matrix $Z_i \in \mathbb{R}^{m}$ derived from $\tilde{X}_i$, we model the predicted response as $f(\tilde{X}_i; \beta) = Z_i \beta$, where $\beta \in \mathbb{R}^{m \times q}$. 

\begin{algorithm}[ht]
\caption{Fr\'echet Quantile Regression Algorithm}
\label{alg:frechet-quantile-reg}
\begin{algorithmic}[1]
\Require Covariate matrix $\tilde{X} \in \mathbb{R}^{n \times p}$, response $Y \in \mathcal{Y}^n$, design matrix $Z \in \mathbb{R}^{n \times m}$, regularization parameters $\lambda, \lambda_{\text{fused}}$, rank $r$, initial $\beta^{(0)}$, tolerance $\epsilon$, max iterations $T$.
\Ensure Fitted coefficient matrix $\hat{\beta}$.
\State Initialize $\beta^{(0)}$ with small random values.
\For{\textbf{each iteration} $t = 0, 1, \dots, T-1$}
    \State Compute the smooth loss gradient:
    \[
    \nabla_{\text{smooth}} = \frac{1}{n} \sum_{i=1}^n w_i Z_i^T \nabla d^2(Y_i, Z_i \beta^{(t)}),
    \]
    where $\nabla d^2$ is the metric-specific subgradient (e.g., computed via Fr\'{e}chet derivatives).
    \State Perform proximal gradient step for $L_1$ penalty:
    \[
    \beta_{\text{temp}} = \text{prox}_{\alpha \lambda \|\cdot\|_1} (\beta^{(t)} - \alpha \nabla_{\text{smooth}}),
    \]
    where $\text{prox}_{\alpha \lambda \|\cdot\|_1}(v) = \text{sign}(v) \max(|v| - \alpha \lambda, 0)$ is applied element-wise.
    \State Apply proximal operator for fused lasso:
    \[
    \beta^{(t+1/2)} = \text{prox}_{\alpha \lambda_{\text{fused}} \sum_{j=2}^m \|\beta_j - \beta_{j-1}\|_1} (\beta_{\text{temp}}),
    \]
    using an efficient solver.
    \State Project onto low-rank manifold:
    \[
    [U, D, V] \leftarrow \text{SVD}(\beta^{(t+1/2)}), \quad \beta^{(t+1)} \leftarrow U[:, 1:r] D[1:r, 1:r] V[:, 1:r]^T.
    \]
    \State Check convergence: if $\frac{\| \beta^{(t+1)} - \beta^{(t)} \|_F}{\| \beta^{(t)} \|_F} < \epsilon$, break.
    \State Adjust step size $\alpha$ via backtracking line search if needed.
\EndFor
\Return $\hat{\beta} = \beta^{(t+1)}$.
\end{algorithmic}
\end{algorithm}

The algorithm initializes $\theta$ randomly and iterates until convergence or a maximum of $T$ steps. In each iteration, it computes the mean squared error (MSE) gradient and the fused lasso gradient, updates $\theta$ via gradient descent, and enforces a low-rank structure using singular value decomposition (SVD). The rank $r$ is a tuning parameter, typically chosen based on data characteristics or cross-validation. Convergence is assessed using a gradient norm threshold, and the learning rate $\alpha$ may be adapted dynamically (e.g., via line search) to ensure stability.
This approach efficiently handles high-dimensional functional data, with regularization enhancing sparsity and smoothness while the low-rank constraint improves computational tractability and interpretability.

\section{Simulation Study} \label{sec:simulation}

\paragraph{Quantile warping}
We first conducted a numerical simulation to illustrate the theoretical framework presented in Section \ref{sec:theory}.
Let $Z = (Z_1, \ldots, Z_p)^\top$ be a random vector following a multivariate normal distribution, where $\mathit{E} Z_j = 0$ and $\mathrm{Cov}(Z_j, Z_k) = 0.9^{|j - k|}$ for some $\rho \in (-1, 1)$. 
Denoting the cumulative distribution function of $N(0, 1)$ by $\Phi$, we set $X = \big( \Phi(Z_1), \ldots, \Phi(Z_p) \big)^\top$. 
This ensures that $X$ is compactly supported on $[0, 1]^p$ and satisfies $\mathit{E} X = (\mu_1, \ldots, \mu_p)^\top$ with $\mu_j = 0.5$ for all $j = 1, \ldots, p$.
We drew a random sample $\big\{(X_{i, 1}, \ldots, X_{i, p}): i = 1, \ldots, n \big\}$ of $X$.
We designed a simulation scenario where the latent model \eqref{model:quantile-reparam} as 
\begin{align} \label{sim:model-warping}
    Q(u | X_i) = u +  \sum_{j = 1}^p \frac{1}{p} \sum_{k = 0}^r c_{j, k} \gamma_k(u) (X_{i, j} - \mu_j) \quad (u \in [0, 1]) 
\end{align}
for $i = 1, \ldots, n$, where the coefficient vector $c_j = (c_{j, 0}, \ldots, c_{j, r})^\top$ are non-negative constants satisfying $\sum_{k = 0}^r c_{j, k} = 1$ and $\{\gamma_k: k = 0, 1\ldots, r \}$ is a set of linearly independent functions satisfying $\gamma_k(0) = \gamma_k(1) = 0$.
In our simulation study, the coefficient vectors \( c_1, \ldots, c_p \) were independently generated from a \( \mathrm{Dirichlet}(1, \ldots, 1) \) distribution and then held fixed throughout the simulation. To normalize the magnitude of covariate effects as \( p \) increased with the sample size \( n \), we applied the multiplication factor \( p^{-1/2} \).
The latent model \eqref{sim:model-warping} specifies the low-rank representation of the coefficient function associated with \( X_i \) as $\beta_j(u) = p^{-1/2} \sum_{k = 0}^r c_{j, k} \gamma_k(u)$ 
for $j = 1, \ldots, p$. 
To ensure that \( u \mapsto Q(u|X_i) \) remains monotone increasing, we defined \( \gamma_k(u) = \Psi(u; k + 1, r - k + 1) - u \), where \( \Psi(\cdot; k_1, k_2) \) denotes the cumulative distribution function of a \( \mathrm{Beta}(k_1, k_2) \) distribution.
Finally, using the quantile warping approach described in \eqref{OT:pseudo-error}, the response quantile function \( Q_{Y_i} \) was generated as  
\begin{align} \label{sim:respose-warping}
    Q_{Y_i}(u) = \sum_{k = 0}^{50} w_{i, k} \Psi\big(Q(u | X_i); k + 1, 50 - k + 1\big) \quad (u \in [0, 1])
\end{align}
for $i = 1, \ldots, n$, where \( w_i = (w_{i, 1}, \ldots, w_{i, 50})^\top \sim \mathrm{Dirichlet}(1, \ldots, 1) \).
As a result, \( Q_{Y_i} \) can be interpreted as the quantile warping of \( Q(u | X_i) \), where the individual coefficient functions \( \beta_j(u) \) of rank \( r \) represent the increase in \( Q(u|X_i) \) relative to the marginal Fr\'echet mean \( \mathrm{Uniform}(0, 1) \) for a unit increase in \( X_{i, j} \).

\begin{figure}[!th]
    \centering
    \includegraphics[width=0.325\linewidth]{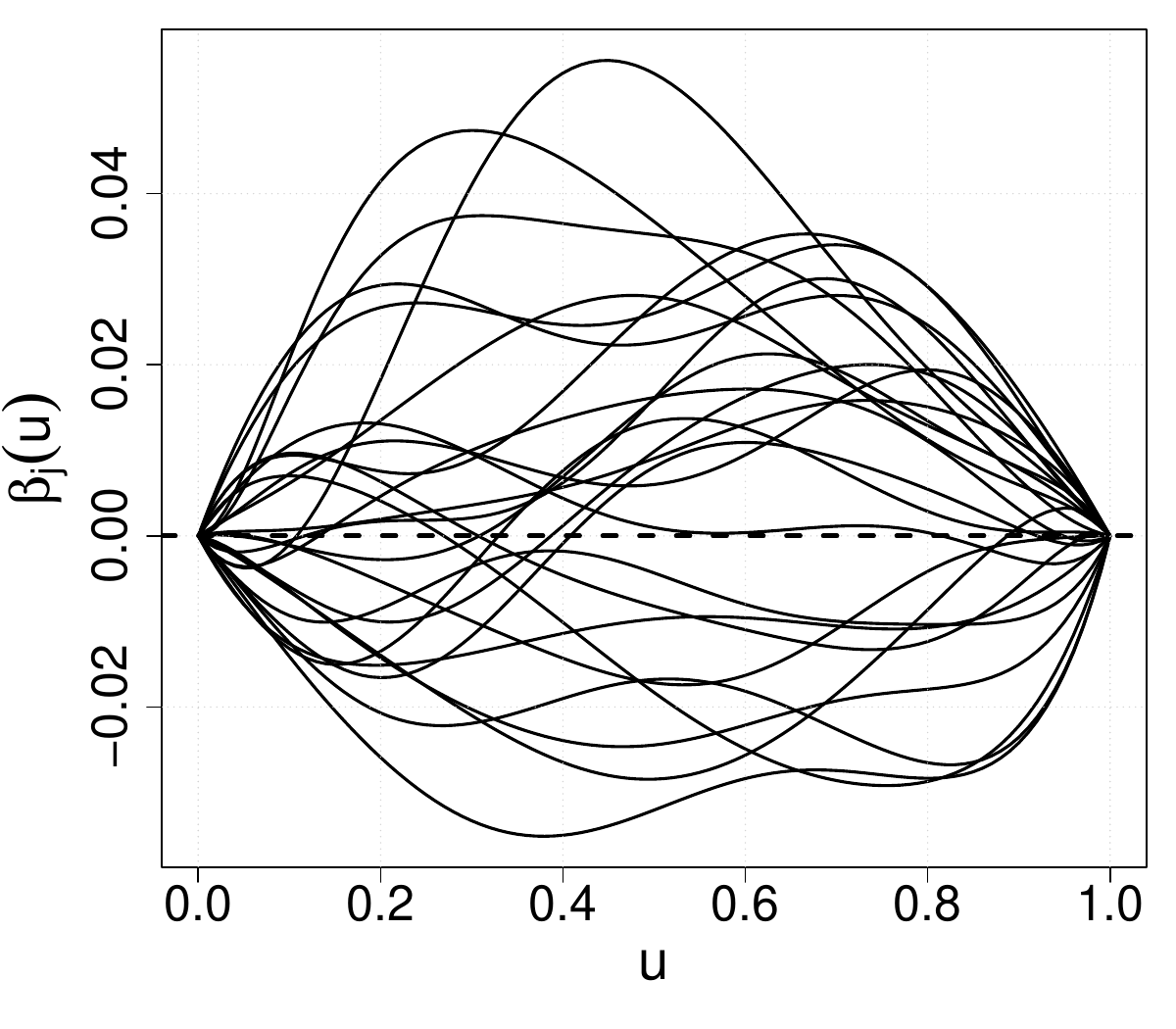}
    \includegraphics[width=0.325\linewidth]{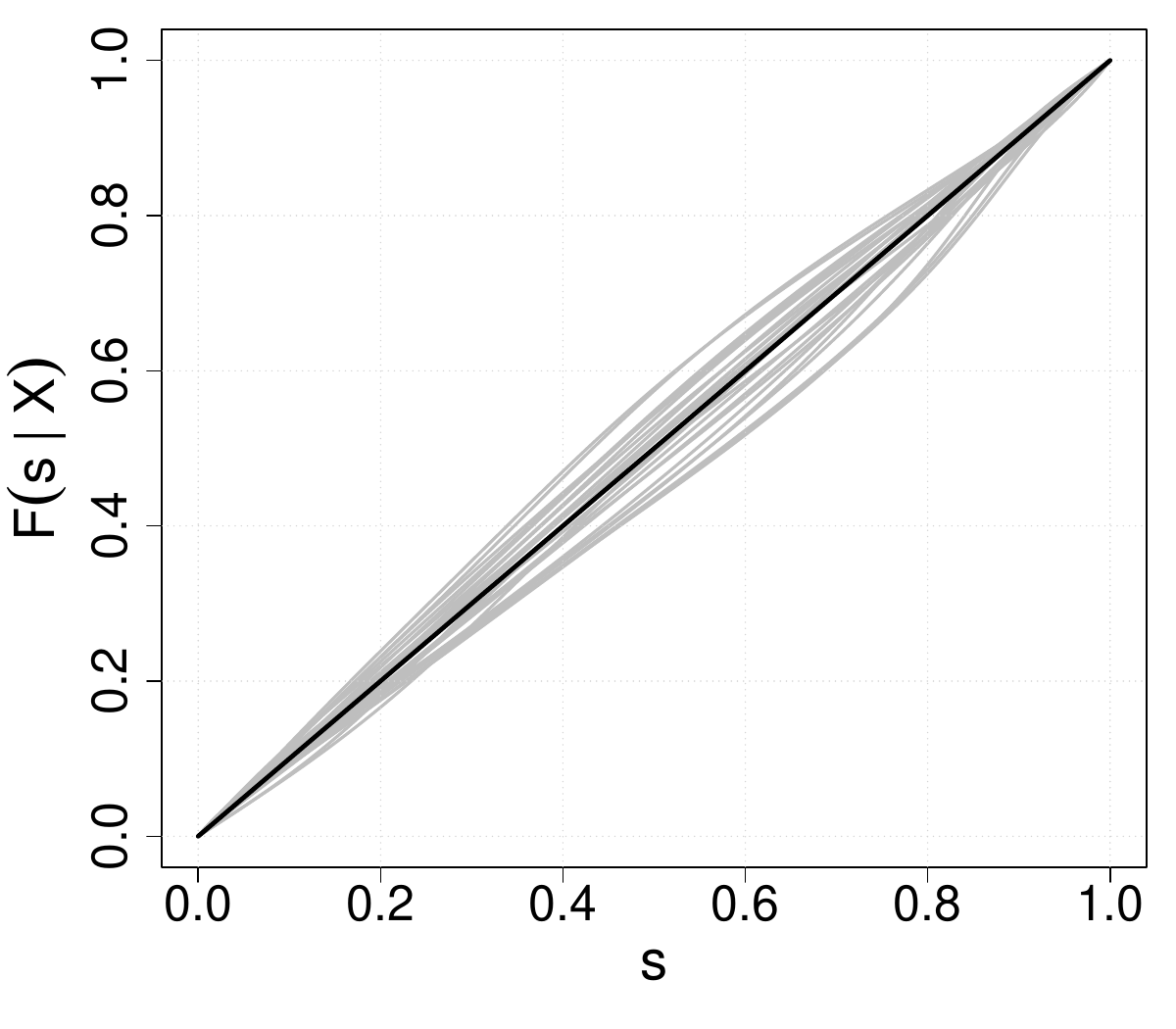}
    \includegraphics[width=0.325\linewidth]{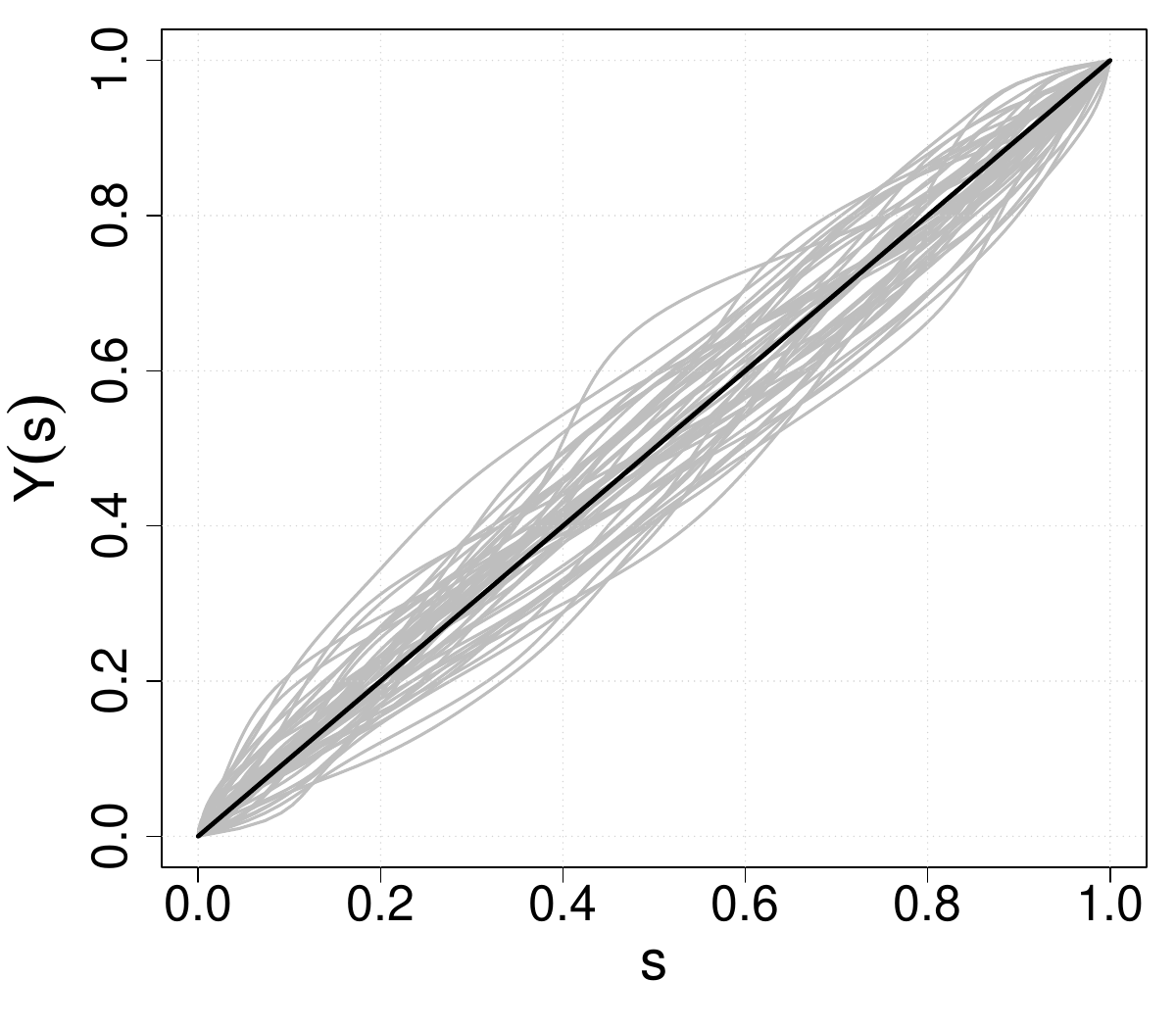}\\
    \includegraphics[width=0.325\linewidth]{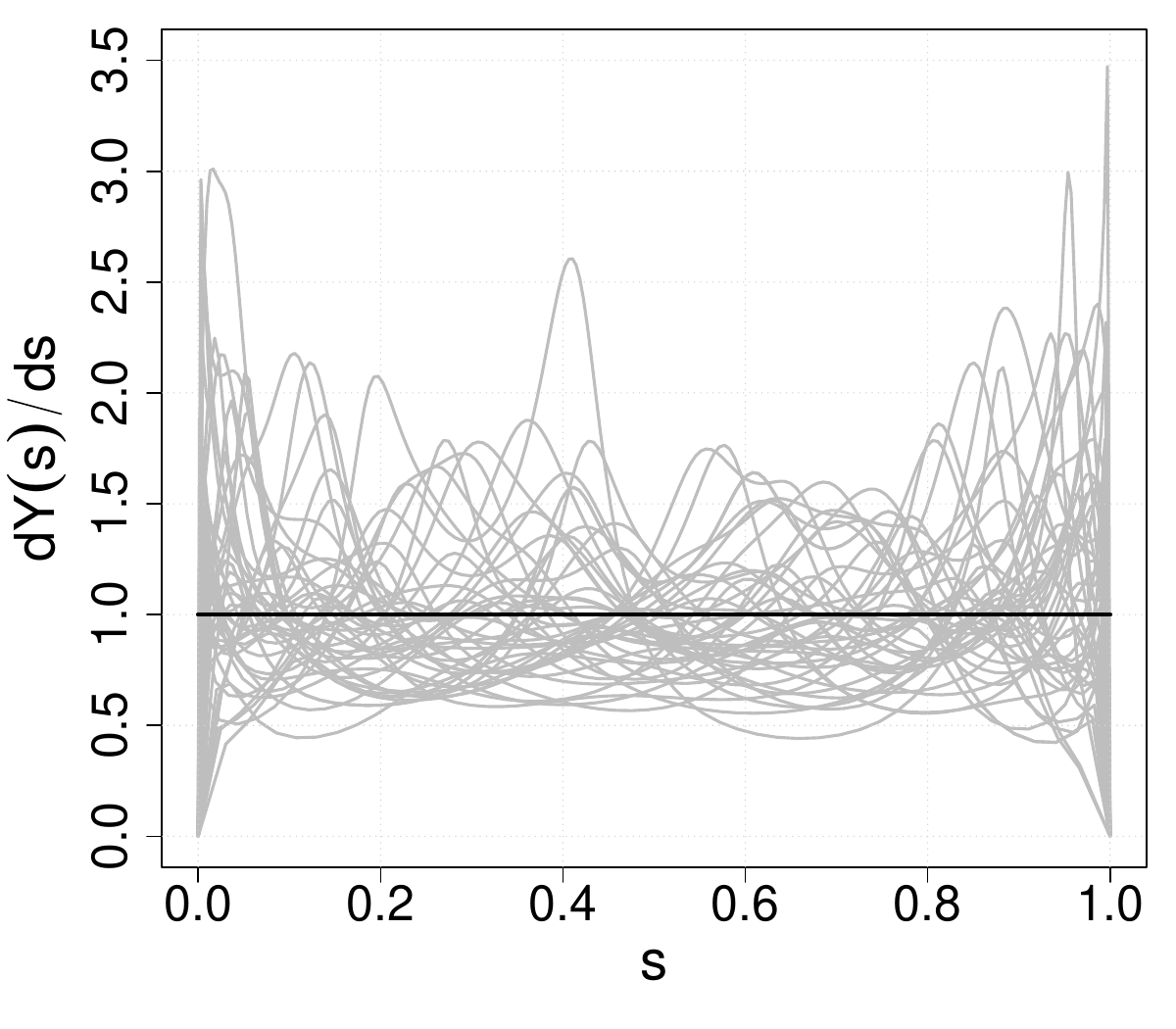}
    \includegraphics[width=0.325\linewidth]{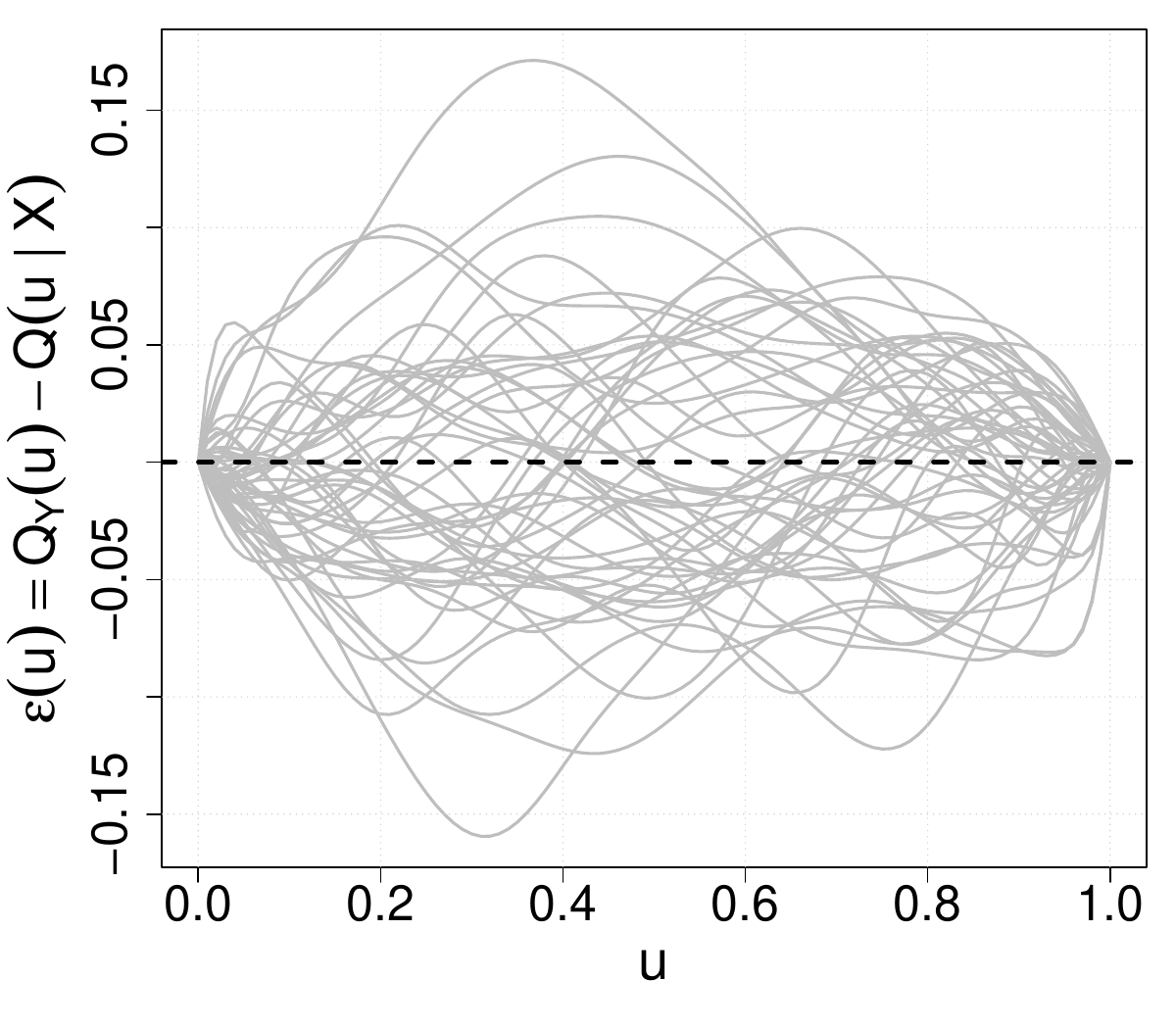}
    \includegraphics[width=0.325\linewidth]{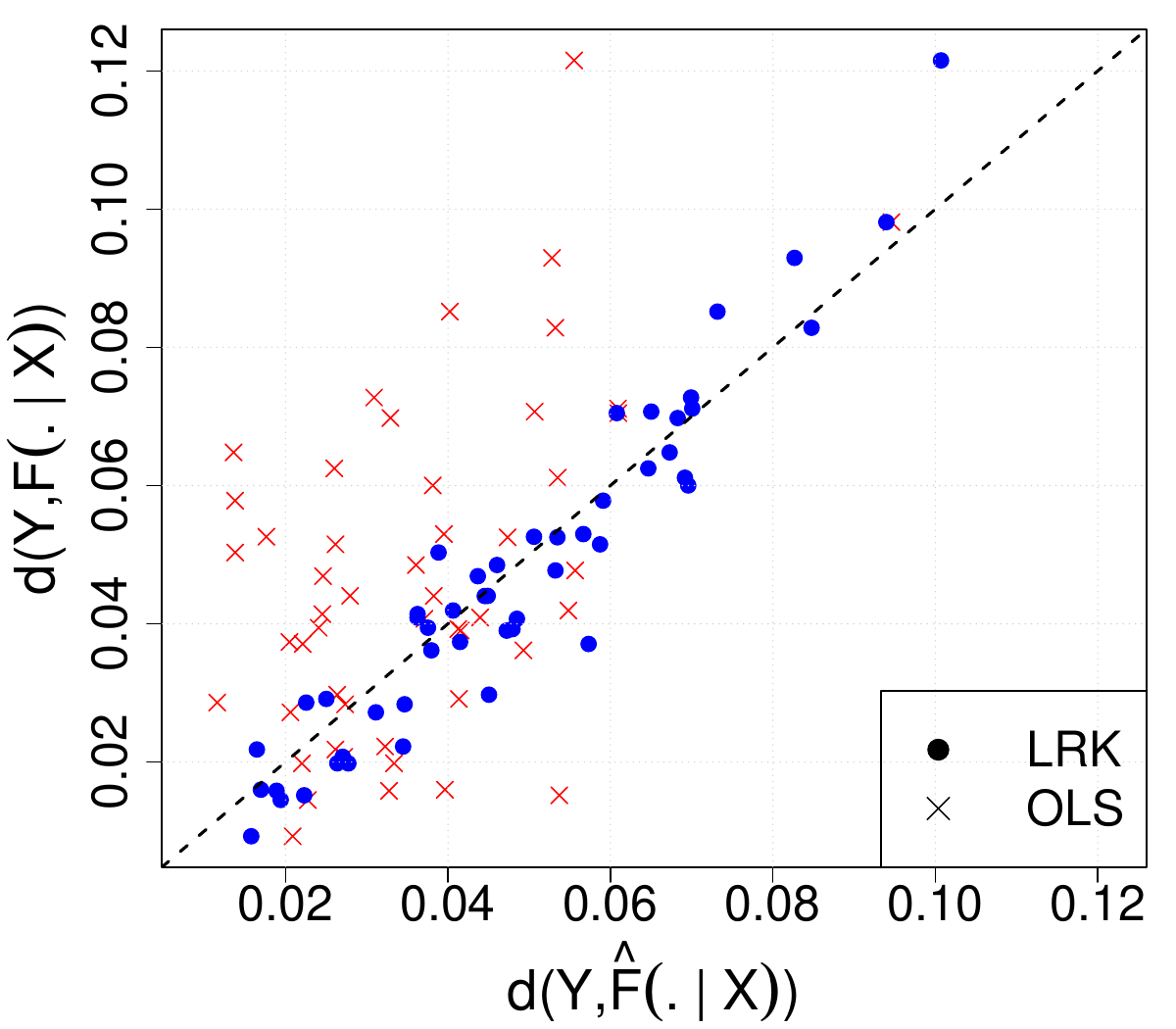}
    \caption{Illustration of simulation under the quantile warping with $n = 50$, $p = 25$, and $r = 10$: (top left) regression coefficient functions $\beta_j(u) = p^{-1} \sum_{k = 0}^r c_{j, k} \gamma_k(u)$ in \eqref{sim:model-warping} ; (top middle) latent models $F(s|X_i) = Q^{-1}(s|X_i)$; (top right and bottom left) response distribution functions $Y_i(s)$ and density functions $Y_i'(s)$; (bottom middle) the pseudo-errors $\varepsilon_i(u)$ associated with the quantile warping \eqref{sim:respose-warping}; (bottom right) the comparison of 2-Wasserstein distance residual-error plots for the proposed low-rank regularization (LRK) method and the ordinary least squares (OLS).}
    \label{fig:sim-illustration-KH}
\end{figure}

To evaluate the finite-sample performance of the proposed method, we generated Monte Carlo (MC) repetitions \( \mathcal{X}_n^{(1)}, \ldots, \mathcal{X}_n^{(B)} \) of a random sample \( \mathcal{X}_n = \big\{ (Y_i, X_i): i = 1, \ldots, n \big\} \) and computed the estimate \( \hat{\beta}^{(b)} \) of \( \beta \) using \( \mathcal{X}_n^{(b)} \) for each \( b = 1, \ldots, B \). 
We then decomposed the mean squared error (MSE) of \( \hat{\beta} \), given by \( \mathrm{MSE}(\hat{\beta}) = B^{-1} \sum_{b = 1}^B \int_0^1 \big\| \hat{\beta}^{(b)}(u) - \beta(u) \big\|_2^2 \, \mathrm{d}u\) 
into its bias and variance components as
\begin{align} \nonumber
    \begin{split}
        \mathrm{Bias}\big(\hat\beta\big)^2
            = \int_0^1 \Big\| \beta(u) - \bar{\hat\beta}(u) \Big\|_2^2 \, \mathrm{d}u
        \quad \text{and} \quad
        \mathrm{Var}\big(\hat\beta\big)
            = \frac{1}{B} \sum_{b = 1}^{B} \int_0^1 \Big\|\hat\beta_j^{(b)}(u) - \bar{\hat\beta}_j(u) \Big\|_2^2 \, \mathrm{d}u,
    \end{split}
\end{align}
where $\bar{\hat\beta}(u) = B^{-1} \sum_{b = 1}^B \hat\beta^{(b)}(u)$ is the MC average of estimates.
Also, we define the mean prediction error as
\begin{align} \nonumber
    \begin{split}
        \mathrm{PE}_{in}
            = \frac{1}{B} \sum_{b = 1}^{B} \frac{1}{n} \sum_{i = 1}^n d_{W_2}\big( Y_i^{(b)}, \widehat{Y}_i^{(b)} \big)
        \quad \text{and} \quad
        \mathrm{PE}_{out}
            = \frac{1}{B} \sum_{b = 1}^{B} \frac{1}{N} \sum_{i = 1}^N d_{W_2}\big( Y_i^\ast, \widehat{Y}_i^{(b, \ast)} \big).
    \end{split}
\end{align}
Here, \( \widehat{Y}_i^{(b)} = \widehat{F}^{(b)}(\cdot|X_i^{(b)}) \) represents the in-sample prediction of \( Y_i^{(b)} \), where \( \widehat{F}^{(b)}(\cdot|x) \) is estimated using \( \mathcal{X}_n^{(b)} \) at the evaluation point \( X = x \). Similarly, \( \widehat{Y}_i^\ast = \widehat{F}^{(b)}(\cdot|X_i^\ast) \) represents the out-of-sample prediction of \( Y_i^\ast \) evaluated at \( X = X_i^\ast \), where \( \mathcal{X}_N^\ast = \{ (Y_i^\ast, X_i^\ast): i = 1, \ldots, N \} \) is a random sample of \( (Y, X) \) that is independent of \( \mathcal{X}_n^{(1)}, \ldots, \mathcal{X}_n^{(b)} \). 

\begingroup
\setlength{\tabcolsep}{6pt} 
\renewcommand{\arraystretch}{1.15} 
\begin{table}[!t]
    \small
    \centering
    \caption{Monte Carlo simulation of size $B = 100$. The tuning parameters are chosen by using sAIC}
    \label{tab:sim-MC-KH}
    \begin{tabular}{cc ccccc ccccc }
    \hline
    \multirow{1.75}{*}{Rank}     &   \multirow{1.75}{*}{Design}   &    
    \multicolumn{5}{c}{\multirow{1.25}{*}{Low-rank regularization}}  & \multicolumn{5}{c}{\multirow{1.25}{*}{Ordinary least squares}}\\
    \cmidrule(lr){3-7} \cmidrule(lr){8-12}
    \multirow{0.25}{*}{($r$)} &   \multirow{0.25}{*}{($n = 2p$)}    &   $\sqrt\mathrm{MSE}$ &  $\mathrm{Bias}$   &   $\sqrt\mathrm{Var}$  &    $\mathrm{PE}_{in}$  &   $\mathrm{PE}_{out}$  &  $\sqrt\mathrm{MSE}$ &  $\mathrm{Bias}$   &   $\sqrt\mathrm{Var}$  &    $\mathrm{PE}_{in}$  &   $\mathrm{PE}_{out}$ \\
    \hline
        \multirow{4}{*}{5}    
        &   50       &  0.108   &   0.107   &   0.015   &   0.045   &   0.049   &  0.578   &   0.051   &   0.575   &   0.032   &   0.069   \\
        &   100      &  0.104   &   0.103   &   0.014   &   0.045   &   0.052   &  0.539   &   0.056   &   0.536   &   0.032   &   0.067   \\
        &   200      &  0.111   &   0.110   &   0.013   &   0.045   &   0.050   &  0.533   &   0.053   &   0.530   &   0.033   &   0.066   \\
        &   400      &  {\color{black} 0.103}   &  {\color{black} 0.102}   &  {\color{black} 0.013}   &  {\color{black} 0.044}   &  {\color{black} 0.052}   &  {\color{black} 0.526}   &  {\color{black} 0.100}   &  {\color{black} 0.517}   &  {\color{black} 0.032}   &  {\color{black} 0.066}   \\
    \hline
        \multirow{4}{*}{10}    
        &   50       &  0.090   &   0.086   &   0.024   &   0.043   &   0.050   &   0.577  &   0.052   &   0.574   &   0.032   &   0.069   \\
        &   100      &  0.097   &   0.095   &   0.020   &   0.043   &   0.050   &  0.539   &   0.055   &   0.536   &   0.032   &   0.067   \\
        &   200      &  {\color{black} 0.094}   &  {\color{black} 0.091}   &  {\color{black} 0.020}   &  {\color{black} 0.044}   &  {\color{black} 0.049}   &  {\color{black} 0.531}   &  {\color{black} 0.113}   &  {\color{black} 0.519}   &  {\color{black} 0.032}   &  {\color{black} 0.067}   \\
        &   400      &  {\color{black} 0.087}   &  {\color{black} 0.085}   &  {\color{black} 0.019}   &  {\color{black} 0.043}   &  {\color{black} 0.050}   &  {\color{black} 0.527}   &  {\color{black} 0.101}   &  {\color{black} 0.517}   &  {\color{black} 0.032}   &  {\color{black} 0.066}   \\
    \hline
        \multirow{4}{*}{20}    
        &   50       &  {\color{black} 0.085}   &  {\color{black} 0.071}   &  {\color{black} 0.046}   &  {\color{black} 0.040}   &  {\color{black} 0.049}   &  {\color{black} 0.573}   &  {\color{black} 0.103}   &  {\color{black} 0.564}   &  {\color{black} 0.033}   &  {\color{black} 0.069}   \\
        &   100      &  {\color{black} 0.079}   &  {\color{black} 0.074}   &  {\color{black} 0.028}   &  {\color{black} 0.042}   &  {\color{black} 0.049}   &  {\color{black} 0.542}   &  {\color{black} 0.100}   &  {\color{black} 0.532}   &  {\color{black} 0.032}   &  {\color{black} 0.067}   \\
        &   200      &  {\color{black} 0.075}   &  {\color{black} 0.069}   &  {\color{black} 0.029}   &  {\color{black} 0.042}   &  {\color{black} 0.049}   &  {\color{black} 0.530}   &  {\color{black} 0.113}   &  {\color{black} 0.518}   &  {\color{black} 0.032}   &  {\color{black} 0.067}   \\
        &   400      &  {\color{black} 0.071}   &  {\color{black} 0.065}   &  {\color{black} 0.028}   &  {\color{black} 0.041}   &  {\color{black} 0.049}   &  {\color{black} 0.529}   &  {\color{black} 0.102}   &  {\color{black} 0.519}   &  {\color{black} 0.032}   &  {\color{black} 0.066}   \\
    \hline
    \end{tabular}
\end{table}
\endgroup


\paragraph{Functional linear regression for quantile functions}
We conducted a simulation study to evaluate the performance of the proposed model. The results demonstrate the efficacy of the quantile low-rank Fr\'echet Regression in capturing the underlying structure and providing accurate quantile estimates.
Here is the dimension setting: 
The sample size $n = 200$, $p=180$, $M = 100$, $u=(0,0.1,0.2,\ldots,0.9,1)$, $u=(-5,-4.98,\ldots,4.98,5)$.
The quantiles were simulated as following:
\[
Q_{i,j} = \mu_j + \sum_{k=1}^{K} u_{i,k} \sigma_{k,j} + \epsilon_{i,j}, \quad i = 1, \ldots, n, \quad j = 1, \ldots, m
\]
where $\mu_j$ is the mean parameter for the $j$-th quantile, $u_{i,k}$is the latent factors affecting the quantiles, $\sigma_{k,j}$ are the weights representing the relationship between the latent factors and quantiles, and $\epsilon_{i,j}$:are independent error terms.
The predictor matrix $X \in \mathbb{R}^{n \times p}$ was generated randomly, and the response $Y$ was constructed using the formula above.
The ordinary method utilized the standard regularization approach without explicitly incorporating the Fr\'echet distance into the optimization. The regularized regression was fitted using
\[
\hat{Q} = Z \hat{\theta} + \mathbf{1} \mu^\top
\]
where $\hat{\theta}$ was estimated using a penalized least squares approach.
The simulations were conducted using the following setup:
the rank for low-rank approximation: $r = 2$ and the $L_1$ regularization parameters: $\lambda = 0.01$
Tables~\ref{tab:table_1} and \ref{tab:table_2} show the average RMSE and its standard deviation of residuals (in parentheses).
The proposed regularized low-rank Fr\'echet quantile regression method outperformed the empirical method in terms of RMSE, demonstrating its effectiveness in leveraging low-rank structures and Fr\'echet distances for quantile regression.

\begin{table}[htbp]
\centering
\caption{Simulation data analysis comparison. The root mean squared error (RMSE) and standard deviation in the parentheses when $n=2p$ and rank as 2. The proposed method has similar training and testing results while the ordinary method tends to overfit the training data.}
\label{tab:table_1}
\begin{tabular}{lcccc}
\hline
Method &Proposed-Training & Ordinary-Training &Proposed-Testing & Ordinary-Testing\\
\hline
($n=100=2p$)& 
0.5318 (0.3139)&
0.4284 (0.2574)&
0.5415 (0.3495)&
0.8658 (0.4870)\\
($n=200=2p$)& 
0.4753 (0.2944)&
0.3601 (0.2201)&
0.5562 (0.3209)&
0.7555 (0.4304)\\
($n=400=2p$)&
0.5575 (0.3277)&
0.3952 (0.2444)&
0.5552 (0.3568)&
0.8293 (0.4874)\\
($n=800=2p$)&
0.5111 (0.3156)&
0.3803 (0.2387)&
0.5213 (0.3332)&
0.7487 (0.4350)\\
($n=1000=2p$)&
0.5554 (0.3281)&
0.4005 (0.2401)&
0.5331 (0.2809)&
0.7800 (0.4399)\\
\hline
\end{tabular}

\end{table}
\begin{table}[htbp]
\centering
\caption{Simulation data analysis comparison. The root mean squared error (RMSE) and standard deviation in the parentheses when $n=p$ with rank as 2. The proposed method has similar training and testing results while the ordinary method is not applicable in the high-dimensional data.}
\label{tab:table_2}
\begin{tabular}{lcc}
\hline
Method &Proposed-Training &Proposed-Testing \\
\hline
($n=p=100$)& 
0.5098 (0.3036)&
0.5387 (0.3364)\\
($n=p=200$)& 
0.4380 (0.2334)&
0.5122 (0.2843)\\
($n=p=400$)&
0.4484 (0.2477)&
0.5444 (0.3077)\\
($n=p=800$)&
0.4990 (0.2824)&
0.5633 (0.3714)\\
($n=p=1000$)&
0.5226 (0.3435)&
0.5374 (0.3320)\\
\hline
\end{tabular}
\end{table}


\section{Real Data Applications} \label{sec:data}
We applied the proposed regularized low-rank Fr\'{e}chet quantile regression (FQR) model to real-world datasets, comparing its performance against functional quantile principal component analysis with regression (FQPCR) \citep{weng2023sparse} using a residual-based framework. Residuals are defined as \(R_i = Y_i - \hat{Q}_\tau(X_i)\), where \(Y_i\) denotes the observed quantile function for the \(i\)-th observation, and \(\hat{Q}_\tau(X_i)\) is the predicted quantile at level \(\tau\) given predictors \(X_i\). This comparison evaluates the residual structure to assess model fit and predictive accuracy across both methods.

\subsection{Application to Global Mortality Data}
We analyzed global mortality data from the United Nations World Population Prospects 2024, comprising abridged life table death counts by age group across 165 countries for 2019 (training) and 2020 (test), adapting a framework from \citep{zhang2023dimension}. The response variable consists of mortality count quantiles over 100 age intervals, scaled to [0, 1]. Predictors—gross domestic product (GDP) per capita, gross value added (GVA) by agriculture, consumer price index (CPI), and health expenditure—were obtained for both years.

The regularized low-rank Fr\'{e}chet quantile regression (FQR) model, with \(\lambda = 0.001\) and \(\lambda_{\text{fused}} = 0.01325\) selected via sAIC, was fitted at rank 2. We compared it to functional quantile principal component analysis (FQPCR) with regression \citep{weng2023sparse}, extracting two components from 2019 quantiles and regressing the second score on the predictors via least squares. For 2020, test scores were predicted, adjusting predictor names, and quantiles reconstructed using training loadings with monotonicity enforced.

Test RMSE for 2020 averaged 0.0385 for FQR and 0.0645 for FQPCR regression (Table~\ref{tab:table_3}). Figure~\ref{fig:mortality_residual_comparison} confirms FQR’s tighter residual distribution, indicating better predictive accuracy. Regression on the Fr\'{e}chet central subspace scores (\(\hat{\beta}^T X\)) produced adjusted R\(^2\) values of 0.464 for FQR versus 0.321 for FQPCR, reinforcing FQR’s superior fit. The GDP per capita coefficient (blue curve, Figure~\ref{fig:mortality_fitted_obs_plots}(d)) mirrors mortality trends, underscoring its explanatory power, while \(\beta_2\) captures age-specific economic influences.

\begin{table}[!t]
\centering
\caption{RMSE for 2020 Mortality Test Data}
\begin{tabular}{lcc}
\hline
Method & FQR & FQPCR \\
\hline
RMSE   & 0.0385 & 0.0645 \\
\hline
\end{tabular}
\label{tab:table_3}
\end{table}

\begin{figure}[htbp]
\centering
\includegraphics[width=0.75\textwidth]{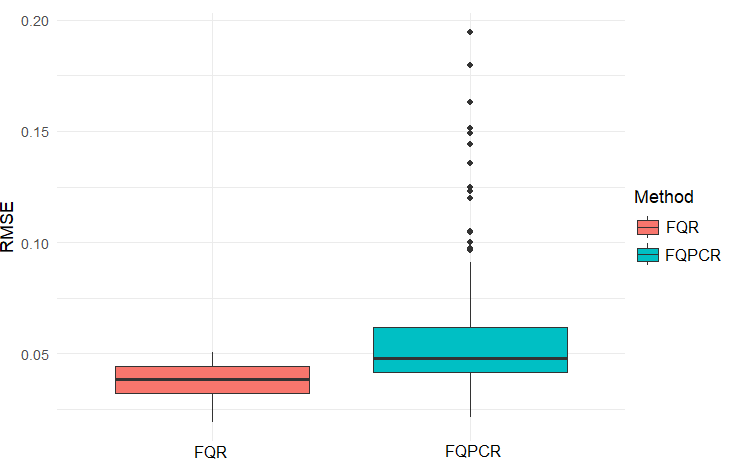}
\caption{Quantile residuals from Fr\'{e}chet quantile regression (FQR) and FQPCR with regression on 2020 mortality test data.}
\label{fig:mortality_residual_comparison}
\end{figure}

\begin{figure}[htbp]
\centering
\includegraphics[width=0.8\textwidth]{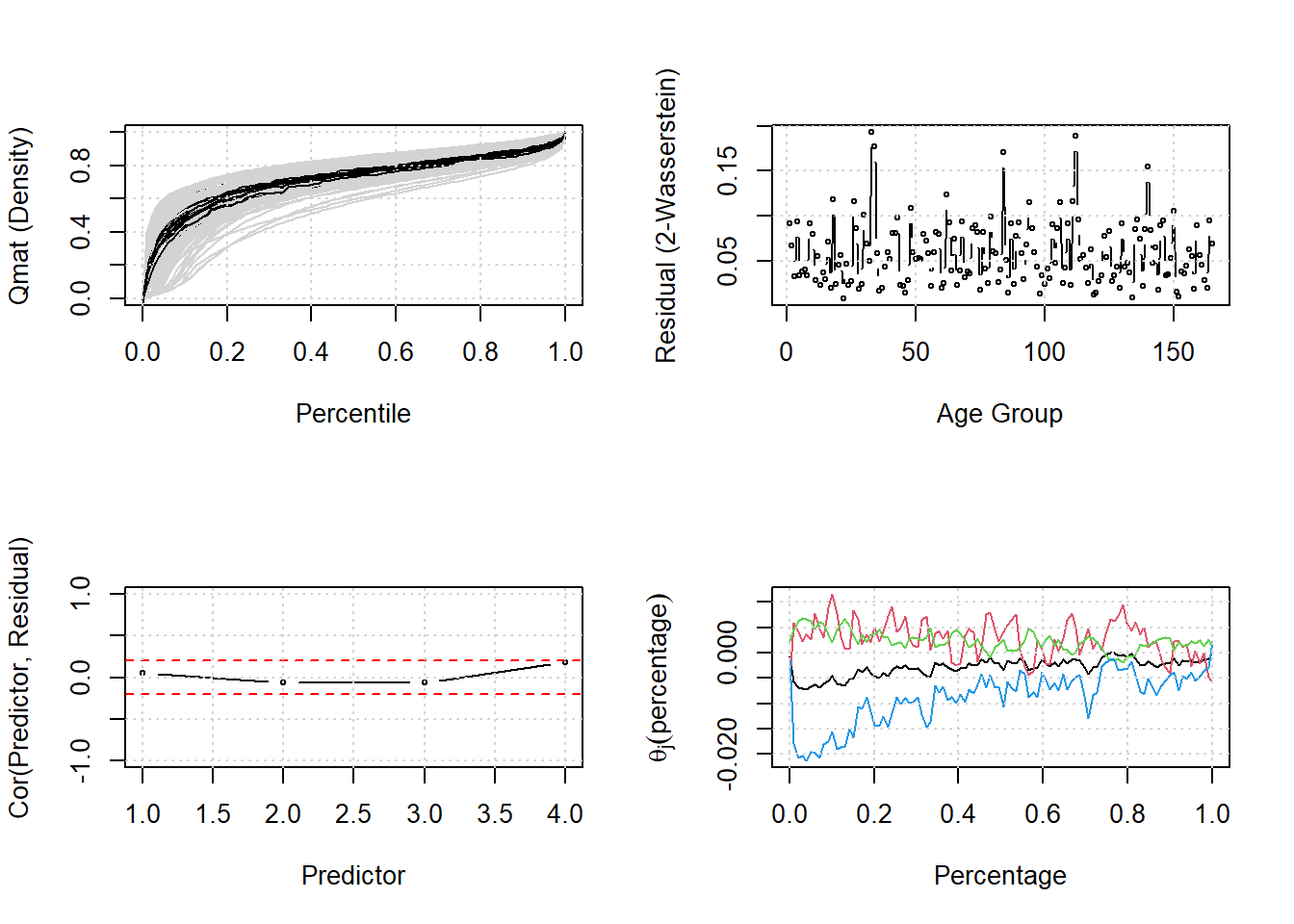}
\caption{Fitted (black) and observed (gray) quantile densities, residuals, correlations, and coefficients from regularized low-rank Fr\'{e}chet quantile regression (rank 2) on mortality data.}
\label{fig:mortality_fitted_obs_plots}
\end{figure}

\subsection{Application to Bike Rental Data}
We applied the proposed regularized low-rank Fr\'{e}chet quantile regression (FQR) model to a bike rental dataset to examine factors influencing hourly rental patterns. This dataset, detailed in \citep{weng2023sparse}, includes normalized bike rental counts as the response variable and six predictors: Holiday (indicator of a public holiday in Washington, D.C.), Working (indicator of neither a weekend nor holiday), Temp (daily mean temperature), Atemp (feels-like temperature), BW (bad weather indicator), and RBW (really bad weather indicator). These predictors capture weather and calendar effects on rental behavior.

The FQR model was fitted with tuning parameters \(\lambda\) and \(\lambda_{\text{fused}}\) selected via smoothed AIC and BIC criteria (sAIC and sBIC) over a grid search. Its performance was compared against Fr\'{e}chet regression with Wasserstein distance (WRI) \citep{petersen2021wasserstein}, functional quantile principal component analysis (FQPCR), and vector quantile regression (VQR) \citep{carlier2016vector}. Table~\ref{tab:WassR_bike} reports root mean squared error (RMSE) on a test set, showing FQR’s superior predictive accuracy with an RMSE of 0.0382, compared to 0.0839 for FQPCR (VQR and WRI results omitted for brevity but follow a similar trend).

\begin{table}[htbp]
\centering
\caption{Root Mean Squared Error (RMSE) for Bike Rental Data on the Test Set}
\begin{tabular}{lcc}
\hline
Method & FQR & FQPCR \\
\hline
RMSE   & 0.0382 & 0.0839 \\
\hline
\end{tabular}
\label{tab:WassR_bike}
\end{table}

Temporal patterns identified by FQR’s first sufficient predictor reveal distinct rental behaviors. On working days, rentals peak twice—between 5 AM and 10 AM, and 3 PM and 8 PM—reflecting commuting trends. On weekends or holidays, a single peak occurs from 10 AM to 8 PM, with maximum activity around 2 PM to 3 PM. The fitted coefficient \(\beta_2\) for the Working predictor, shown as the red curve in Figure~\ref{fig:bike1}(d), effectively captures these dynamics. Figure~\ref{fig:bike1} further illustrates FQR’s fit, depicting observed (gray) and fitted (black) densities, residuals, correlations, and coefficients with a rank-2 model, highlighting its ability to model non-Euclidean responses.
Residual analysis, presented in Figure~\ref{fig:residual_comparison}, compares quantile residuals from FQR and FQPCR on the test set. FQR exhibits lower residual variability, indicating a better fit than competing methods (VQR residuals follow a similar pattern, not shown).
This application underscores FQR’s effectiveness in modeling non-Euclidean responses, revealing temporal rental patterns, and outperforming alternative approaches in predictive accuracy.

\begin{figure}[htbp]
\centering
\includegraphics[width=0.8\textwidth]{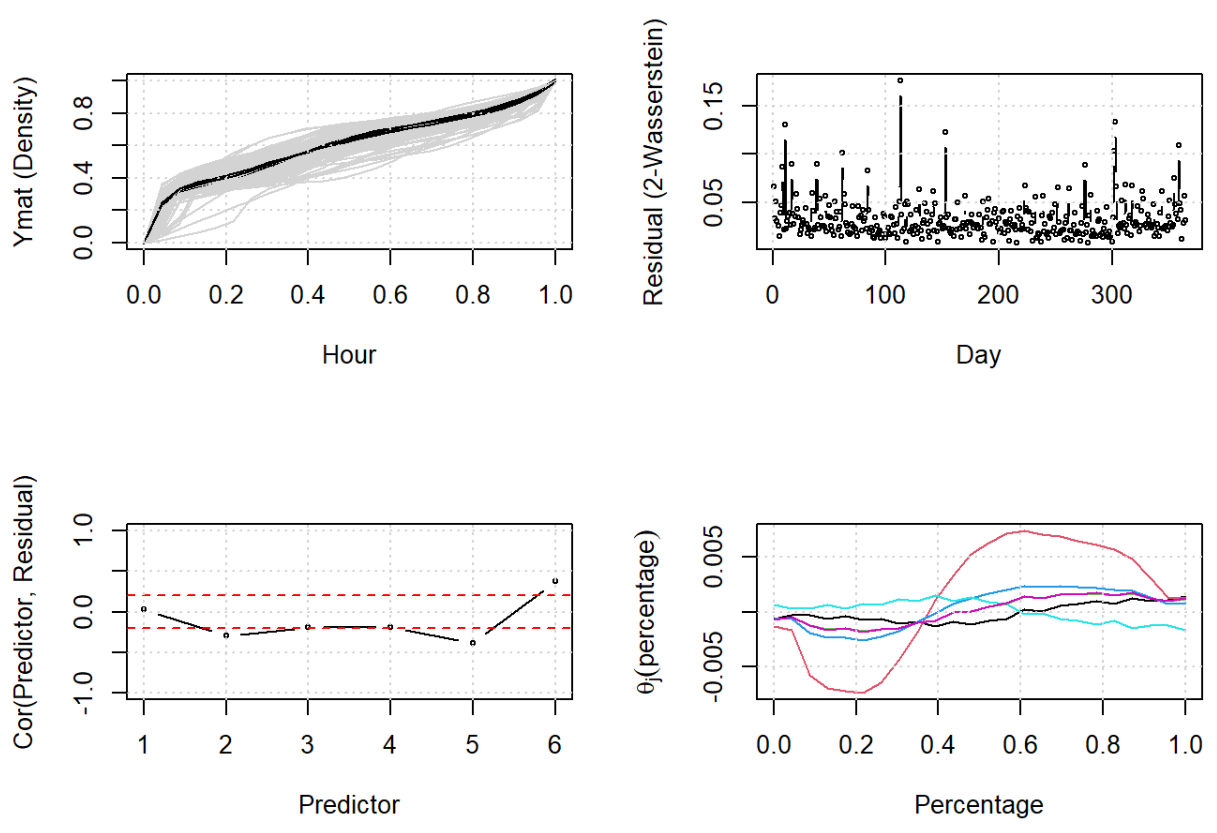}
\caption{Fitted (black) and observed (gray) densities, residuals, correlations, and coefficients from the regularized low-rank Fr\'{e}chet quantile regression model (rank = 2) applied to bike rental data.}
\label{fig:bike1}
\end{figure}

\begin{figure}[htbp]
\centering
\includegraphics[scale=0.75]{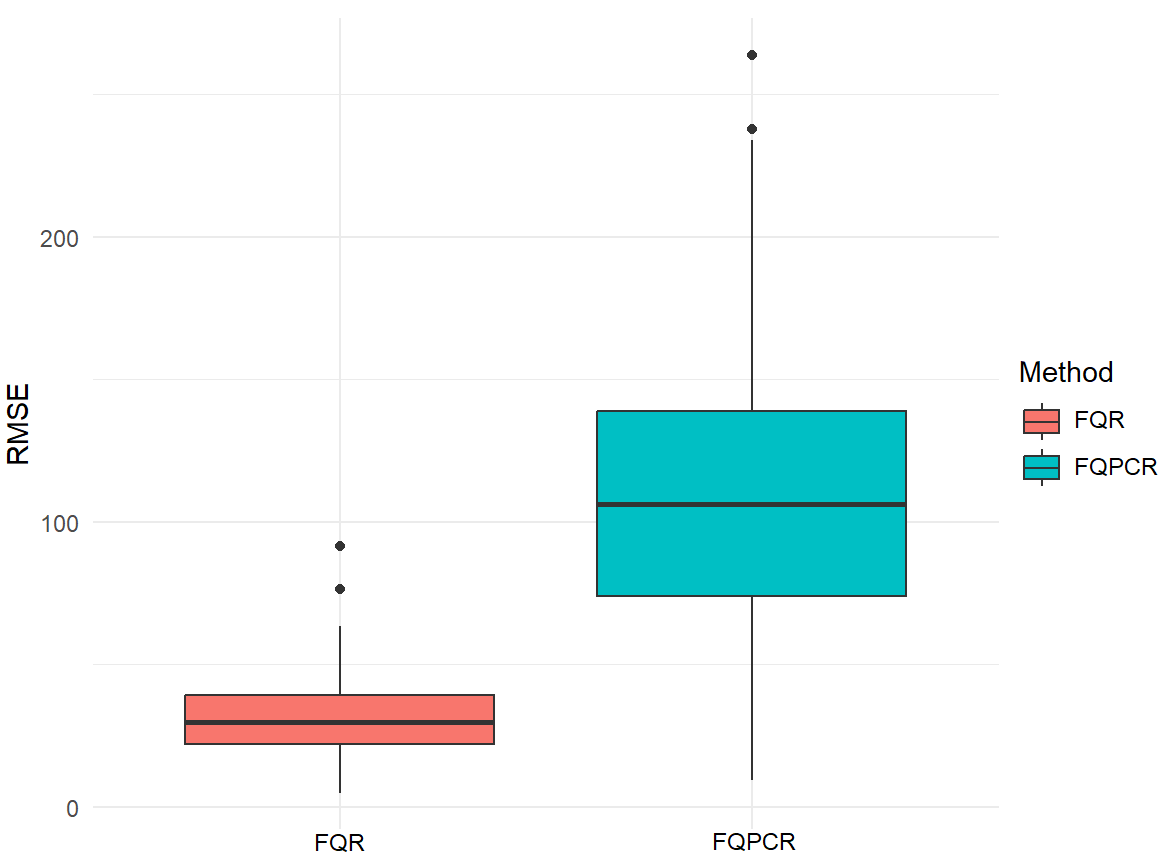}
\caption{Quantile residuals from Fr\'{e}chet quantile regression (FQR) and functional quantile PCA (FQPCR) on the bike rental test set.}
\label{fig:residual_comparison}
\end{figure}

\section{Conclusion}
\label{sec:conclusion}

The low-rank Fr\'echet quantile regression model introduced in this paper offers a principled and effective approach for analyzing distributional response data in high-dimensional settings. Classical Fr\'echet regression methods often suffer from instability or inefficiency when dealing with high-dimensional or collinear covariates, due to the ill-posed nature of inverting sample covariance matrices. Our framework addresses this issue by integrating a latent quantile regression formulation with low-rank constraints and regularization. Specifically, the model transforms the regression problem from the space of distribution functions into the space of quantile functions and postulates a latent functional linear model $\mathrm{E}(Q_Y(u) \mid X = x) = \alpha_0(u) + \beta(u)^\top x$. This structure allows us to define interpretable coefficient functions while respecting the inherent geometry of distributional data. Through a reparameterization, the intercept $\alpha(u)$ is shown to represent the quantile function of the marginal Fr\'echet mean, thereby linking the model directly to the classical notion of mean regression in the Wasserstein space.

To estimate the coefficient function $\beta(u)$ in practice, we propose to discretize the quantile domain over a uniform grid and reformulate the estimation as a matrix optimization problem. This involves recovering a low-rank coefficient matrix $\mathbf{B} \in \mathbb{R}^{p \times M}$ under a penalized least-squares loss, where the penalty function $\mathcal{J}(\mathbf{B})$ enforces desirable properties such as sparsity or smoothness. We incorporate either an $\ell_1$ penalty for coefficient sparsity or a fused lasso penalty to promote piecewise constant behavior across the quantile index, enabling edge-preserving smoothing. This design makes the model highly adaptive and interpretable, while also ensuring numerical stability in high-dimensional or low-sample-size scenarios. Unlike prior work, such as that of Song and Han (2023), our methodology directly accommodates high-dimensional covariates and remains computationally tractable, even when the number of predictors exceeds the sample size.

The theoretical properties of the estimator are rigorously justified using tools from optimal transport and functional regression. By viewing the discrepancy between the true and modeled quantile functions as a pseudo-error process, we show that the model admits an interpretation analogous to classical linear regression, but in a non-Euclidean setting. Under standard assumptions—including sub-Gaussianity of the pseudo-error, bounded second moments, and bounded variation of the true coefficient function—we derive convergence rates for the discretized coefficient matrix $\widehat{\mathbf{B}}$ and its functional extension $\hat{\beta}(u)$. The low-rank estimator achieves a rate of $O_P\big( \sqrt{r(p+M)/n} + \sqrt{\lambda p M/n} \big)$, significantly improving over unconstrained estimators. Moreover, we establish that with a sufficiently fine grid, the discretized estimator recovers the true functional rank, and the approximation error from discretization vanishes at rate $O(M^{-1/2})$. These theoretical guarantees also extend to predictive performance, showing that the expected squared Wasserstein distance between the predicted and true response distributions converges to the irreducible noise level $\sigma_\varepsilon$, up to the same statistical and approximation rates.

Despite these strengths, the proposed method is subject to several limitations from theoretical, methodological, and computational perspectives. Theoretically, the reliance on discretizing the quantile domain introduces a dependence on grid size $M$. A coarse grid may result in significant discretization errors, particularly for complex or heavy-tailed quantile functions, while an overly fine grid increases computational complexity and risks numerical instability. The assumption of a low-rank structure for the coefficient matrix may not hold for datasets with high-rank or complex dependencies, potentially leading to biased estimates. Additionally, the model's linear assumption may be restrictive for capturing non-linear relationships or interactions, limiting its generalizability. Methodologically, the sensitivity to regularization parameters $\lambda$ and $\lambda_{\text{fused}}$ poses challenges, as misspecification can lead to over- or under-regularization, affecting interpretability and performance. From a computational perspective, the non-convexity introduced by the low-rank constraint via singular value decomposition (SVD) complicates convergence guarantees and increases the risk of converging to local minima. The SVD computation, with a cost of $O(\min(pM^2, p^2M))$, becomes a bottleneck for large $p$ or $M$, limiting scalability. Moreover, the evaluation of the Fr\'{e}chet derivative in complex metric spaces can be numerically unstable, particularly for non-smooth distributions or small samples.

Finally, simulation studies and real data applications demonstrate the practical value of the proposed method. In high-dimensional settings where traditional methods such as ordinary least squares or functional principal component regression break down, our estimator maintains both accuracy and interpretability. For example, in analyzing global mortality and bike rental datasets, the model achieves a test RMSE of 0.0385, outperforming existing benchmarks. The ability to impose structural assumptions such as low rank and piecewise smoothness enables the model to capture essential patterns while suppressing noise. This balance of theoretical rigor, computational feasibility, and empirical effectiveness makes the proposed framework a powerful and flexible tool for distributional regression in complex data environments. Future research directions include developing adaptive rank selection procedures, incorporating structured sparsity patterns, and extending the framework to handle more general metric-space responses and covariate types.

\appendix

\section{Appendix}
\subsection{Technical Conditions} \label{sec:conditions}
Below, we have collected technical conditions for the main results.
\begin{enumerate}[label=(C\arabic*)]
    \item \label{condition:regularization} For any $\epsilon > 0$, there exists $C_{\mathcal{J}, p} = C_{\mathcal{J}, p}(\epsilon) > 0$ such that $M^{-1/2} \| \mathbf{B} - \mathbf{B}^\ast \|_F > \epsilon \Rightarrow \big|\mathcal{J}(\mathbf{B}) - \mathcal{J}(\mathbf{B}^\ast) \big| \leq C_{\mathcal{J}, p} M^{-1/2}\| \mathbf{B} - \mathbf{B}^\ast \|_F$ for every $\mathbf{B} \in \mathcal{M}_r$ near $\mathbf{B}^\ast$. 
    \item \label{condition:RSC} For each $\epsilon > 0$, there exits $c_{\mathbf{X}} = c_{\mathbf{X}}(\epsilon) > 0$ such that for any $\mathbf{B} \in \mathcal{M}_r$, $M^{-1/2} \| \mathbf{B} - \mathbf{B}^\ast \|_F \geq \epsilon \Rightarrow n^{-1} \| \widetilde{\mathbf{X}} (\mathbf{B} - \mathbf{B}^\ast)\|_F^2 \geq c_{\mathbf{X}}^2 \| \mathbf{B} - \mathbf{B}^\ast \|_F^2$. 
\end{enumerate}
Condition \ref{condition:regularization} can be regarded as the local Lipschitz continuity of $\mathcal{J}(\mathbf{B})$ near $\mathbf{B}^\ast$, which is useful when the penalty function is non-differentiable near the true $\mathbf{B}^\ast$.
For example, both $\mathcal{J}(\mathbf{B}) = \sum_{j = 1}^p M^{-1} \sum_{m = 1}^M |b_{j, m}|$ and $\mathcal{J}(\mathbf{B}) = \sum_{j = 1}^p M^{-1} \sum_{m = 1}^M | b_{j, m} - b_{j, m-1} |$ with $C_{\mathcal{J}, p}  = C_{\mathcal{J}} \sqrt{p}$ satisfy the condition with some absolute $C_{\mathcal{J}} > 0$.
Condition \ref{condition:RSC} corresponds to the restricted strong convexity \citep{negahban2011estimation, negahban2012unified} of $\mathcal{M}_r$ near $\mathbf{B}^\ast$.
If $\widehat\Sigma = n^{-1} \widetilde{\mathbf{X}}^\top \widetilde{\mathbf{X}}$ is of full rank, the condition holds with $c_{\mathbf{X}}^2 = \sigma_{\mathrm{min}}^2(\widehat\Sigma)$, where $\sigma_{\mathrm{min}}^2(\widehat\Sigma)$ denotes the smallest eigenvalue of $\widehat\Sigma$.

\subsection{Proofs of Theorems} \label{sec:proof-consistency}

To utilize the low-dimensional manifold structure of $\mathcal{M}$ near $\mathbf{B}^\ast$, we introduce some notations:
Let $\mathbf{B}^\ast = \mathbf{U}^\ast \mathbf{S}^\ast \mathbf{V}^{\ast \top}$ be the singular value decomposition of $\mathbf{B}^\ast$ with orthonormal matrices $\mathbf{U}^\ast \in \mathbb{R}^{p \times r}$ and $\mathbf{V}^\ast \in \mathbb{R}^{M \times r}$ and the diagonal matrix $\mathbf{S}^\ast = \mathrm{diag}(\sigma_1(\mathbf{B}^\ast), \ldots, \sigma_{r}(\mathbf{B}^\ast))$ satisfying $\sigma_1(\mathbf{B}^\ast) \geq \cdots \geq \sigma_{r}(\mathbf{B}^\ast) > 0$. 
Denote the tangent space of $\mathcal{M}_r$ near $\mathbf{B}^\ast$ by $\mathcal{T}_{\ast} = \mathcal{T}_{\mathbf{B}^\ast}(\mathcal{M}_r)$ and the projection of $\mathbf{A} \in \mathbb{R}^{p \times M}$ onto the tangent space $\mathcal{T}_{\ast}$ by
\begin{align} \label{pf:thm-proj-tangent}
    \begin{split}
        \mathcal{P}_\ast (\mathbf{A}) 
        &= \Pi_{\mathbf{U}^\ast} \mathbf{A} + \mathbf{A} \Pi_{\mathbf{V}^\ast} - \Pi_{\mathbf{U}^\ast} \mathbf{A} \Pi_{\mathbf{V}^\ast} \\
        &= \Pi_{\mathbf{U}^\ast} \mathbf{A} \Pi_{\mathbf{V}^\ast}^\perp + \Pi_{\mathbf{U}^\ast}^\perp \mathbf{A} \Pi_{\mathbf{V}^\ast} + \Pi_{\mathbf{U}^\ast} \mathbf{A} \Pi_{\mathbf{V}^\ast},
    \end{split}
\end{align}
where $\Pi_{\mathbf{U}^\ast} = \mathbf{U}^\ast \mathbf{U}^{\ast \top}$ and $\Pi_{\mathbf{V}^\ast} = \mathbf{V}^\ast \mathbf{V}^{\ast \top}$ \citep{koch2007dynamical, absil2015low}. 
Then, the orthogonal complement of $\mathbf{A}$ in the normal space at $\mathbf{B}^\ast$ is given by $\mathcal{P}_\ast^\perp (\mathbf{A}) = \Pi_{\mathbf{U}^\ast}^\perp \mathbf{A} \Pi_{\mathbf{V}^\ast}^\perp$, 
where $\Pi_{\mathbf{U}^\ast}^\perp = I - \Pi_{\mathbf{U}^\ast}$ and $\Pi_{\mathbf{V}^\ast}^\perp = I - \Pi_{\mathbf{V}^\ast}$ with the identity matrix $I$.

\begin{lem} \label{lem:large-deviation}
    Suppose Assumptions \ref{assump:support} and \ref{assump:cov-kernel}, along with the above Conditions \ref{condition:regularization} and \ref{condition:RSC}, hold.
    Let $\widetilde{\mathbf{X}} \in \mathbb{R}^{n \times p}$ be the design matrix having the $(i, j)$-element with $X_{i, j} - \bar{X}_j$ and $\mathcal{E} \in \mathbb{R}^{n \times M}$ is the pseudo-error matrix having $(i, m)$-element with $\epsilon_{i, m} = Q_{Y_i}(u_m) - Q(u_m | X_i)$ as defined in \eqref{OT:pseudo-error}. 
    Then, we have $\| \Pi_{\mathbf{U}^\ast} (\widetilde{\mathbf{X}}^\top \mathcal{E}) \Pi_{\mathbf{V}^\ast} \|_F = O_P(\sqrt{nM})$, $\| \Pi_{\mathbf{U}^\ast} (\widetilde{\mathbf{X}}^\top \mathcal{E}) \Pi_{\mathbf{V}^\ast}^\perp \|_F = O_P(\sqrt{nr(M-r)})$, $\| \Pi_{\mathbf{U}^\ast}^\perp (\widetilde{\mathbf{X}}^\top \mathcal{E}) \Pi_{\mathbf{V}^\ast} \|_F = O_P(\sqrt{n(p-r)r})$, and $\| \Pi_{\mathbf{U}^\ast}^\perp (\widetilde{\mathbf{X}}^\top \mathcal{E}) \Pi_{\mathbf{V}^\ast}^\perp \|_F, = O_P(\sqrt{n(p-r)(M-r)})$. 
\end{lem}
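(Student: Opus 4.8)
The plan is to recognize that all four statements are instances of a single bound on $\|P(\widetilde{\mathbf{X}}^\top\mathcal{E})R\|_F$, where $P$ denotes a row-space projector (either $\Pi_{\mathbf{U}^\ast}$ of rank $r$ or $\Pi_{\mathbf{U}^\ast}^\perp$ of rank $p-r$) and $R$ a column-space projector (either $\Pi_{\mathbf{V}^\ast}$ of rank $r$ or $\Pi_{\mathbf{V}^\ast}^\perp$ of rank $M-r$), and to control each one by computing its second Frobenius moment in closed form and then invoking Markov's inequality, since an $O_P$ rate follows directly from a matching bound on $\mathbb{E}\|\cdot\|_F^2$.

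First I would establish the conditional second-moment identity for the pseudo-error matrix. Conditioning on $\widetilde{\mathbf{X}}$, the entries $\epsilon_{i,m}$ have mean zero because $\mathbb{E}(\varepsilon(u)\mid X=x)=0$, they are independent across $i$, and by Assumption \ref{assump:cov-kernel} together with \eqref{OT:error-cov} their conditional cross-covariance in the quantile index is the homogeneous, design-free matrix $\Sigma_\varepsilon^{(M)}\in\mathbb{R}^{M\times M}$ with $(m,m')$-entry $C_T\big(\alpha(u_m),\alpha(u_{m'})\big)$. Consequently, for any fixed symmetric $A\in\mathbb{R}^{n\times n}$ one has $\mathbb{E}\big[\mathcal{E}^\top A\,\mathcal{E}\mid\widetilde{\mathbf{X}}\big]=\mathrm{tr}(A)\,\Sigma_\varepsilon^{(M)}$. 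Taking $A=\widetilde{\mathbf{X}}P\widetilde{\mathbf{X}}^\top$ and using $P^2=P$, $R^2=R$ with the cyclic property of the trace yields the exact formula
\begin{align*}
    \mathbb{E}\big[\,\|P\,\widetilde{\mathbf{X}}^\top\mathcal{E}\,R\|_F^2 \,\big|\, \widetilde{\mathbf{X}}\,\big] = \mathrm{tr}\big(P\,\widetilde{\mathbf{X}}^\top\widetilde{\mathbf{X}}\big)\cdot \mathrm{tr}\big(R\,\Sigma_\varepsilon^{(M)}\big).
\end{align*}
Since $\widetilde{\mathbf{X}}$ is column-centered, $\mathbb{E}[\widetilde{\mathbf{X}}^\top\widetilde{\mathbf{X}}]=(n-1)\Sigma$, so taking a further expectation gives $\mathbb{E}\|P\widetilde{\mathbf{X}}^\top\mathcal{E}R\|_F^2=(n-1)\,\mathrm{tr}(P\Sigma)\,\mathrm{tr}(R\,\Sigma_\varepsilon^{(M)})$, a product of a design factor and an error factor.

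Next I would bound the two factors by rank times spectrum. For the design factor, $\mathrm{tr}(P\Sigma)\le \mathrm{rank}(P)\,\sigma_{\max}^2(\Sigma)$, which is $O(nr)$ for $P=\Pi_{\mathbf{U}^\ast}$ and $O(n(p-r))$ for $P=\Pi_{\mathbf{U}^\ast}^\perp$ under the bounded-spectrum condition on $\Sigma$ used in Theorem \ref{cor:mspe}. For the error factor I would use $\mathrm{tr}(R\,\Sigma_\varepsilon^{(M)})\le \mathrm{rank}(R)\,\sigma_{\max}(\Sigma_\varepsilon^{(M)})$, which is $O(r)$ for $R=\Pi_{\mathbf{V}^\ast}$ and $O(M-r)$ for $R=\Pi_{\mathbf{V}^\ast}^\perp$. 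Multiplying the two factors produces the claimed orders $nr(M-r)$, $n(p-r)r$, and $n(p-r)(M-r)$ for the three cross terms; for the in-subspace term $\Pi_{\mathbf{U}^\ast}(\cdot)\Pi_{\mathbf{V}^\ast}$ I would instead retain the coarser full-trace bound $\mathrm{tr}(\Pi_{\mathbf{V}^\ast}\Sigma_\varepsilon^{(M)})\le\mathrm{tr}(\Sigma_\varepsilon^{(M)})=\sum_m \mathrm{Var}(\varepsilon(u_m))=O(M)$, which is finite by the discussion following Assumption \ref{assump:cov-kernel}, giving the stated $O_P(\sqrt{nM})$. Markov's inequality applied to each second-moment bound then delivers the four $O_P$ statements.

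The main obstacle is the error-covariance factor $\mathrm{tr}(R\,\Sigma_\varepsilon^{(M)})$. Controlling it at the advertised order requires both the homogeneity supplied by Assumption \ref{assump:cov-kernel}—so that the conditional covariance does not inflate with the design and the closed-form identity above holds—and a uniform spectral control of the discretized kernel matrix $\Sigma_\varepsilon^{(M)}$, whose entries couple all $M$ grid points. The delicate point is that a naive entrywise variance bound does not suffice to pin the rank scaling for the perpendicular directions: one needs a genuine operator-norm/rank argument there, while only the cruder trace bound $\mathrm{tr}(\Sigma_\varepsilon^{(M)})=O(M)$ is available for the signal direction $\Pi_{\mathbf{V}^\ast}$. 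Assumptions \ref{assump:support}--\ref{assump:sub-gaussian}, which yield $\sup_{s}\mathrm{Var}(T(s))<\infty$ and a sub-Gaussian increment structure for $\varepsilon$, are exactly what I would use to secure this spectral control, and sub-Gaussianity is moreover available to sharpen the Markov step to an exponential tail should a uniform-in-direction statement be needed downstream in Theorem \ref{thm:consistency}.
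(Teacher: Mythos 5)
Your reduction of all four claims to the exact conditional second moment is correct and is a genuinely different route from the paper's. The identity
\begin{align*}
\mathit{E}\big[\,\|P\,\widetilde{\mathbf{X}}^\top\mathcal{E}\,R\|_F^2\,\big|\,X_1,\ldots,X_n\big]
=\mathrm{tr}\big(P\,\widetilde{\mathbf{X}}^\top\widetilde{\mathbf{X}}\big)\,\mathrm{tr}\big(R\,\Sigma_\varepsilon^{(M)}\big),
\qquad
\Sigma_\varepsilon^{(M)}=\big[\mathrm{Cov}\big(\varepsilon(u_m),\varepsilon(u_{m'})\big)\big]_{m,m'=1}^M,
\end{align*}
does follow from conditional mean-zero errors, independence across $i$, and the design-free covariance supplied by \ref{assump:cov-kernel}. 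The paper instead rotates $\widetilde{\mathbf{X}}^\top\mathcal{E}$ into the singular bases $\mathbf{U}^\ast,\mathbf{V}^\ast$, argues that each rotated entry $\sum_i ({U_j^\ast}^\top\widetilde X_i)(\epsilon_i^\top V_k^\ast)$ is a centered sum of i.i.d.\ sub-Gaussian variables with $\psi_2$-norm of order $\sqrt{M}$, applies Hoeffding's inequality entrywise, and counts entries. Your moment computation is cleaner and, on the cross blocks, actually tighter than an entrywise count.

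The genuine gap is exactly at the point you label delicate and then assert away. To get the advertised $O_P(\sqrt{n(p-r)r})$ for $\Pi_{\mathbf{U}^\ast}^\perp(\widetilde{\mathbf{X}}^\top\mathcal{E})\Pi_{\mathbf{V}^\ast}$ your argument needs $\mathrm{tr}(\Pi_{\mathbf{V}^\ast}\Sigma_\varepsilon^{(M)})=O(r)$, i.e.\ $\sigma_{\max}(\Sigma_\varepsilon^{(M)})=O(1)$. Assumptions \ref{assump:support}--\ref{assump:cov-kernel} do \emph{not} deliver this: for any non-degenerate continuous covariance kernel the discretized Gram matrix has top eigenvalue of order $M$, since with $v=M^{-1/2}\mathbf{1}$ one has $v^\top\Sigma_\varepsilon^{(M)}v\approx M\int_0^1\!\int_0^1 \mathrm{Cov}\big(\varepsilon(u),\varepsilon(u')\big)\,\mathrm{d}u\,\mathrm{d}u'=M\,\mathrm{Var}\big(\int_0^1\varepsilon\big)$. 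The only bound genuinely available is $\mathrm{tr}(R\,\Sigma_\varepsilon^{(M)})\le\mathrm{tr}(\Sigma_\varepsilon^{(M)})=O(M)$ for either choice of $R$, which yields $O_P(\sqrt{nrM})$, $O_P(\sqrt{nrM})$, $O_P(\sqrt{n(p-r)M})$ and $O_P(\sqrt{n(p-r)M})$ for the four blocks; the two blocks with $M-r\asymp M$ then match the stated rates (up to treating $r$ as bounded, a slack the paper also takes), but the $\Pi_{\mathbf{U}^\ast}^\perp(\cdot)\Pi_{\mathbf{V}^\ast}$ rate is missed by a factor $\sqrt{M/r}$. Reaching $O_P(\sqrt{n(p-r)r})$ requires an additional assumption that $\mathrm{Var}(\epsilon^\top V_k^\ast)=V_k^{\ast\top}\Sigma_\varepsilon^{(M)}V_k^\ast$ stays bounded along the right singular directions of $\mathbf{B}^\ast$, which neither your write-up nor the stated hypotheses provide. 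Two smaller points: the lemma does not assume $\sigma_{\mathrm{max}}^2(\Sigma)$ bounded (that hypothesis appears only in Theorem \ref{cor:mspe}), so you should either add it or bound $\mathrm{tr}(\Pi_{\mathbf{U}^\ast}\Sigma)$ directly from the compact support of $X$; and your design factor for the top-left block is $(n-1)\mathrm{tr}(\Pi_{\mathbf{U}^\ast}\Sigma)=O(nr)$, not $O(n)$, so even there the trace bound gives $O_P(\sqrt{nrM})$ rather than the stated $O_P(\sqrt{nM})$.
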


\begin{proof}[Proof of Lemma \ref{lem:large-deviation}]
    We note that
    \begin{align}
        \begin{split}
            \| \Pi_{\mathbf{U}^\ast} (\widetilde{\mathbf{X}}^\top \mathcal{E}) \Pi_{\mathbf{V}^\ast} \|_F^2
            &= \left\| \Pi_{\mathbf{U}^\ast} \Bigg( \sum_{i = 1}^n \widetilde{X}_i \epsilon_i^\top \Bigg) \Pi_{\mathbf{V}^\ast} \right\|_F^2\\
            &= \left\| \mathbf{U}^\ast \Bigg( \sum_{i = 1}^n \big( {\mathbf{U}^\ast}^\top \widetilde{X}_i \big) \otimes (\mathbf{V}^\ast \varepsilon_i) \Bigg) {\mathbf{V}^\ast}^\top \right\|_F^2\\
            &= \sum_{j = 1}^r \sum_{k = 1}^r \Bigg( \sum_{i = 1}^n {U_j^\ast}^\top\widetilde{X}_i \epsilon_i^\top V_k^\ast \Bigg)^2,
        \end{split}
    \end{align}
    where $U_j^\ast$ and $V_k^\ast$ are the $j$-th and $k$-th column vectors of $\mathbf{U}^\ast$ and $\mathbf{V}^\ast$, respectively. 
    To get the above last equality, we used the fact that $\| \mathbf{U}^\ast \mathbf{A} {\mathbf{V}^\ast}^\top \|_F^2 = \mathrm{tr}\big( \mathbf{V}^\ast \mathbf{A}^\top {\mathbf{U}^\ast}^\top \mathbf{U}^\ast \mathbf{A} {\mathbf{V}^\ast}^\top\big)  = \mathrm{tr}\big( \mathbf{A}^\top \mathbf{A} {\mathbf{V}^\ast}^\top \mathbf{V}^\ast \big) = \mathrm{tr}(A^\top A)$ holds for any $\mathbf{A} \in \mathbb{R}^{r \times r}$, where ${\mathbf{U}^\ast}^\top \mathbf{U}^\ast$ and ${\mathbf{V}^\ast}^\top \mathbf{V}^\ast$ are identity matrices in $\mathbb{R}^{r \times r}$.
    
    Let $\| W \|_{\psi_2} = \inf\{ k \geq 0: \mathit{E} e^{ (W - \mathit{E}W)^2 / k^2} \leq 2 \}\}$ denote the sub-Gaussian norm of a random variable $W$.
    By Assumption \ref{assump:support}, it can be verified that a random variable $W_{j, k}^\ast = {U_j^\ast}^\top X \epsilon^\top V_k^\ast$ is sub-Gaussian with mean zero with $\| W_{j, k} \|_{\psi_2} \leq \sqrt{M} \tau_X \Delta_{\mathcal{S}} $ for each pair of $(j, k)$.
    Then, it follows from Hoeffding's inequality that
    \begin{align}
        \begin{split}
            P\left( \Bigg| \sum_{i = 1}^n \sum_{i = 1}^n {U_j^\ast}^\top X_i \epsilon_i^\top V_k^\ast \Bigg| \geq t \right) \leq 2 \exp\bigg( -\frac{\kappa t^2}{n M \tau_X^2 \Delta_{\mathcal{S}}^2 }\bigg)
        \end{split}
    \end{align}
    holds for every $t > 0$ with some absolute constant $\kappa > 0$, i.e., $n^{-1} \sum_{i = 1}^n {U_j^\ast}^\top\widetilde{X}_i \epsilon_i^\top V_k^\ast = O_P(\sqrt{M/n})$.
    This indicates that $\| \Pi_{\mathbf{U}^\ast} (\widetilde{\mathbf{X}}^\top \mathcal{E}) \Pi_{\mathbf{V}^\ast} \|_F = O_P(\sqrt{nM})$. 
    The rest of them can also be verified similarly.
\end{proof}

\subsubsection{Proof of Theorem \ref{thm:consistency}} \label{sec:proof-rate}
    Writing $L_n(\mathbf{B}) = \sum_{i = 1}^n M^{-1} \sum_{m = 1}^M \big( Q_{Y_i}(u_m) - \overline{Q}_Y(u_m) - \sum_{j = 1}^p  b_{j, m} (X_{i, j} - \bar{X}_j) \big)^2$, we have 
    \begin{align} \label{pf:thm-objectve-function-diff}
        L_n(\mathbf{B}) - L_n(\mathbf{B}^\ast) = M^{-1} \| \widetilde{\mathbf{X}}(\mathbf{B} - \mathbf{B}^\ast) \|_F^2 + 2 M^{-1} \langle \mathbf{B} - \mathbf{B}^\ast, \widetilde{\mathbf{X}}^\top \mathcal{E} \rangle_F,
    \end{align}
    where $\langle \mathbf{A}, \mathbf{B} \rangle_F = \mathrm{tr} (\mathbf{A}^\top \mathbf{B})$ denotes the Frobenius inner product of two matrices. 
    Recalling the definition of $\widehat{\mathbf{B}} \in \mathcal{M}_r$ in \eqref{optim:discrete-penalized}, we have
    \begin{align} \label{pf:thm-objectve-function-penalty}
        L_n(\widehat{\mathbf{B}}) + \lambda \mathcal{J}(\widehat{\mathbf{B}}) \leq L_n(\mathbf{B}^\ast) + \lambda \mathcal{J}(\mathbf{B}^\ast).
    \end{align}   
    With the orthogonal decomposition in \eqref{pf:thm-proj-tangent}, we also have
    \begin{align} \label{pf:thm-proj}
        \langle \widehat{\mathbf{B}} - \mathbf{B}^\ast, \widetilde{\mathbf{X}}^\top \mathcal{E} \rangle_F = \langle \widehat{\mathbf{B}} - \mathbf{B}^\ast, \mathcal{P}_\ast(\widetilde{\mathbf{X}}^\top \mathcal{E}) \rangle_F + \langle \widehat{\mathbf{B}} - \mathbf{B}^\ast, \mathcal{P}_\ast^\perp(\widetilde{\mathbf{X}}^\top \mathcal{E}) \rangle_F.
    \end{align}

    Suppose $M^{-1/2} \| \widehat{\mathbf{B}} - \mathbf{B}^\ast \|_F \geq \delta = \frac{\sigma_{r}(\mathbf{B}^\ast)}{2} M^{-1/2}$.
    We note that Lemma 6 of \citep{huang2024framework} implies $\big\| \mathcal{P}_\ast^\perp (\widehat{\mathbf{B}} - \mathbf{B}^\ast) \big\|_F \leq \frac{2}{\sigma_{r}(\mathbf{B}^\ast)} \| \mathcal{P}_\ast (\widehat{\mathbf{B}} - \mathbf{B}^\ast) \|_F^2$. 
    From this together with \eqref{pf:thm-proj}, the Cauchy–Schwarz inequality gives
    \begin{align} \label{pf:thm-cross-term}
    \begin{split}
        \big| \langle \widehat{\mathbf{B}} - \mathbf{B}^\ast, \widetilde{\mathbf{X}}^\top \mathcal{E} \rangle_F \big|
        &\leq \| \mathcal{P}_\ast(\widetilde{\mathbf{X}}^\top \mathcal{E}) \|_F \cdot \| \widehat{\mathbf{B}} - \mathbf{B}^\ast \|_F
        + \frac{2}{\sigma_{r}(\mathbf{B}^\ast)} \| \mathcal{P}_\ast^\perp(\widetilde{\mathbf{X}}^\top \mathcal{E}) \|_F \cdot \| \widehat{\mathbf{B}} - \mathbf{B}^\ast \|_F^2.
    \end{split}
    \end{align}
    Moreover, by substituting $\widehat{\mathbf{B}}$ for $\mathbf{B}$ in \eqref{pf:thm-objectve-function-diff} and combining it with \eqref{pf:thm-objectve-function-penalty}, it also follows from \eqref{pf:thm-proj} and Conditions \ref{condition:regularization}--\ref{condition:RSC} with any $\epsilon \in (0, \delta)$ that
    \begin{align} \label{pf:thm-objective-function}
        \begin{split}
            n c_{\mathbf{X}}^2 M^{-1} \| \mathbf{B} - \mathbf{B}^\ast \|_F^2
            &\leq M^{-1} \| \widetilde{\mathbf{X}}(\widehat{\mathbf{B}} - \mathbf{B}^\ast) \|_F^2 \\
            &\leq 2 M^{-1} \big| \langle \widehat{\mathbf{B}} - \mathbf{B}^\ast, \widetilde{\mathbf{X}}^\top \mathcal{E} \rangle_F \big| + \lambda C_{\mathcal{J}, p} M^{-1/2} \| \widehat{\mathbf{B}} - \mathbf{B}^\ast \|_F.
            \end{split}
    \end{align}
    Therefore, from \eqref{pf:thm-cross-term} and \eqref{pf:thm-objective-function}, we conclude that
    \begin{align} \label{pf:thm-stochastic-upper-bound}
        \begin{split}
            M^{-1/2}\| \widehat{\mathbf{B}} - \mathbf{B}^\ast \|_F 
            &\leq \max \left\{ \delta, \frac{n^{-1} \big( 2 M^{-1/2}  \| \mathcal{P}_\ast(\widetilde{\mathbf{X}}^\top \mathcal{E}) \|_F + \lambda C_{\mathcal{J}, p} \big)}{c_{\mathbf{X}}^2 - \frac{4}{\sigma_r(\mathbf{B}^\ast)} n^{-1} M^{-1/2} \| \mathcal{P}_\ast^\perp(\widetilde{\mathbf{X}}^\top \mathcal{E}) \|_F} \right\}.
        \end{split}
    \end{align}

    To get the rate of convergence, we determine the magnitude of the second component within the maximum sign in the right side of \eqref{pf:thm-stochastic-upper-bound}.
    To this end, we note that
    \begin{align} \nonumber
        \begin{split}
            \| \mathcal{P}_\ast(\widetilde{\mathbf{X}}^\top \mathcal{E}) \|_F 
            &= \| \Pi_{\mathbf{U}^\ast} (\widetilde{\mathbf{X}}^\top \mathcal{E}) \Pi_{\mathbf{V}^\ast} \|_F + \| \Pi_{\mathbf{U}^\ast} (\widetilde{\mathbf{X}}^\top \mathcal{E}) \Pi_{\mathbf{V}^\ast}^\perp \|_F + \| \Pi_{\mathbf{U}^\ast}^\perp (\widetilde{\mathbf{X}}^\top \mathcal{E}) \Pi_{\mathbf{V}^\ast} \|_F,\\
            \| \mathcal{P}_\ast^\perp(\widetilde{\mathbf{X}}^\top \mathcal{E}) \|_F 
            &= \| \Pi_{\mathbf{U}^\ast}^\perp (\widetilde{\mathbf{X}}^\top \mathcal{E}) \Pi_{\mathbf{V}^\ast}^\perp \|_F.
        \end{split}
    \end{align}
    Finally, applying Lemma \ref{lem:large-deviation}, we conclude that $\| \widehat{\mathbf{B}} - \mathbf{B}^\ast \|_F = O_P\big(\sqrt{r (p + M) / n} + \sqrt{\lambda p M / n} \big)$.

\subsubsection{Proof of Theorem \ref{cor:integral}} \label{sec:proof-integral}

By the mean value theorem, there exists $\xi_m \in (u_{m-1}, u_m)$ such that
\begin{align*}
    \int_{u_{m - 1}}^{u_m} \|\hat\beta^\ast(u) - \beta^\ast(u)\|_2^2 \, \mathrm{d}u
    = \frac{1}{M} \|\hat\beta^\ast(\xi_m) - \beta^\ast(\xi_m)\|_2^2
\end{align*}
for each $m = 1, \ldots, M$.
It follows that
\begin{align} \label{pf-cor-integral:mvt}
    \begin{split}
        \int_0^1 \|\hat\beta^\ast(u) - \beta^\ast(u)\|_2^2 \, \mathrm{d}u
        &= \frac{1}{M} \sum_{m=1}^{M} \|\hat\beta^\ast(\xi_m) - \beta^\ast(\xi_m)\|_2^2
    \end{split}
\end{align}
Since $\hat\beta^\ast(\xi_m)$ is the linear interpolation of $\hat\beta(u_{m - 1})$ and $\hat\beta(u_m)$, we note that $\|\hat\beta^\ast(\xi_m) - \hat\beta(u_m)\|_2 \leq \|\hat\beta(u_m) - \hat\beta(u_{m-1})\|_2 \leq \|\hat\beta(u_m) - \beta^\ast(u_m)\|_2 + \|\beta^\ast(u_m) - \beta^\ast(u_{m-1})\|_2 + \|\beta^\ast(u_{m-1}) - \hat\beta(u_{m-1})\|_2$.
Moreover, $\|\hat\beta(u_m) - \beta^\ast(\xi_m)\|_2 \leq \| \hat\beta(u_m) - \beta^\ast(u_m) \|_2 + \| \beta^\ast(u_m) - \beta^\ast(\xi_m) \|_2$. 
It follows from Jensen's inequality that 
\begin{align} \label{pf-cor-integral:approx}
    \begin{split}
        \frac{1}{M} \sum_{m=1}^{M} \|\hat\beta^\ast(\xi_m) - \beta^\ast(\xi_m)\|_2
        &\leq 3\Bigg(\frac{1}{M} \sum_{m=1}^{M} \|\hat\beta(u_m) - \beta^\ast(u_m)\|_2^2 \Bigg)^{1/2} + \frac{2}{M} \mathrm{TV}(\beta^\ast) \\
        &= O( M^{-1/2}) \cdot \| \widehat{\mathbf{B}} - \mathbf{B}^\ast \|_F
            + O(M^{-1})
    \end{split}
\end{align}
Since $\sum_{m=1}^{M} \|\hat\beta^\ast(\xi_m) - \beta^\ast(\xi_m)\|_2^2 \leq \big(\sum_{m=1}^{M} \|\hat\beta^\ast(\xi_m) - \beta^\ast(\xi_m)\|_2 \big)^2$, combining \eqref{pf-cor-integral:mvt} and \eqref{pf-cor-integral:approx} with Theorem \ref{thm:consistency}, we obtain the desired result.

\subsubsection{Proof of Theorem \ref{cor:mspe}} \label{sec:proof-mspe}

It follows from \eqref{model:quantile-reparam} that
\begin{align}
    \begin{split}    
        \big(Q_\textrm{new}(u) - \widehat{Q}_\textrm{new}(u) \big)^2
        &\leq 5 \bigg( \big( \overline{Q}_Y(u) -  \mathrm{E} \, Q_Y(u)\big)^2 + \Big| \big(\hat\beta^\ast(u) - \beta^\ast(u) \big)^\top (X_{\textrm{new}} - \mu) \Big|^2\\
        &\quad\qquad + \Big| \big(\hat\beta^\ast(u) - \beta^\ast(u) \big)^\top (\bar{X} - \mu) \Big|^2 + \Big| \beta^\ast(u)^\top (\bar{X} - \mu) \Big|^2 + \varepsilon(u)^2 \bigg)
    \end{split} \nonumber
\end{align}
holds for all $u \in [0, 1]$. 
We first note that 
\begin{align}
    \begin{split}
        \mathrm{E} \bigg( \int_0^1 \big( \overline{Q}_Y(u) -  \mathrm{E} \, Q_Y(u)\big)^2 \, \mathrm{d}u \bigg)
        &= \int_0^1 \mathrm{E}\big( \overline{Q}_Y(u) -  \mathrm{E} \, Q_Y(u)\big)^2 \, \mathrm{d}u\\
        &\leq \frac{1}{n} \int_0^1 \mathrm{E}\big( Q_Y(u)^2\big) \, \mathrm{d}u\\
        &= \frac{1}{n} \mathrm{E} \bigg( \int_\mathcal{S} s^2 \, \mathrm{d}Y(s) \bigg)
        = O(n^{-1}),
    \end{split} \nonumber
\end{align}
where we applied Fubini's theorem to the first and second equalities. 
It follows from Markov's inequality that $\int_0^1 \big( \overline{Q}_Y(u) -  \mathrm{E} \, Q_Y(u)\big)^2 \, \mathrm{d}u = O_P(n^{-1})$. 
We also get from Fubini's theorem that $\mathrm{E} \big( \int_0^1 \varepsilon(u)^2 \, \mathrm{d}u \big) = \sigma_\varepsilon^2$ and that
\begin{align}
    \begin{split}
        \mathrm{E} \bigg( \int_0^1 \Big| \big(\hat\beta^\ast(u) - \beta^\ast(u) \big)^\top (X_{\textrm{new}} - \mu) \Big|^2 \, \mathrm{d}u \,\bigg|\, \mathcal{X}_n \bigg) 
        &\leq \int_0^1 \big(\hat\beta^\ast(u) - \beta^\ast(u) \big)^\top \Sigma \big(\hat\beta^\ast(u) - \beta^\ast(u) \big) \, \mathrm{d}u\\
        &\leq \sigma_{\mathrm{max}}^2(\Sigma) \cdot \| \hat\beta^\ast - \beta^\ast \|_2^2.
    \end{split} \nonumber
\end{align}
where $\sigma_{\mathrm{max}}^2(\Sigma)$ denotes the largest eigenvalue of $\Sigma = \mathrm{Var}(X)$. 
Moreover, the Cauchy–Schwarz inequality gives
\begin{align}
    \begin{split}
        \int_0^1 \Big| \big(\hat\beta^\ast(u) - \beta^\ast(u) \big)^\top (\bar{X} - \mu) \Big|^2 \, \mathrm{d}u
        &\leq \| \bar{X} - \mu \|^2 \cdot \| \hat\beta^\ast - \beta^\ast \|_2^2 \\
        &= O_P(p / n) \cdot \| \hat\beta^\ast - \beta^\ast \|_2^2.
    \end{split} \nonumber
\end{align}
To get the above probabilistic bound, we used Markov's inequality with $\mathrm{E} \| \bar{X} - \mu \|^2 \leq \mathrm{tr}(\Sigma) / n$. 
Similarly, $\int_0^1 \big| \beta^\ast(u)^\top (\bar{X} - \mu) \big|^2 \, \mathrm{d}u = O_P(p / n)$. 
Combining all the above results with Jensen's inequality, we conclude that
\begin{align}
    \begin{split}
        \mathit{E} \big[ d_{W_2}(Y_\textrm{new}, \widehat{Y}_\textrm{new}) \,|\, \mathcal{X}_n \big] 
        &\leq \Big( \mathit{E} \big[ d_{W_2}(Y_\textrm{new}, \widehat{Y}_\textrm{new})^2 \,|\, \mathcal{X}_n \big] \Big)^{1/2} \\
        &= O(\sigma_\varepsilon) + O_P\big( \sqrt{r (p + M) / n} + \sqrt{\lambda  p M / n} \big) + O(M^{-1/2}).
    \end{split} \nonumber
\end{align}

\bibliographystyle{abbrv}
\bibliography{ref-bib}

\end{document}